\newtheorem{theorem}{Theorem}[section]
\newtheorem{lemma}[theorem]{Lemma}
\newtheorem{definition}{Definition}
\newtheorem{corollary}[theorem]{Corollary}
\newtheorem{remark}{Remark}
\newtheorem{example}{Example}
\newcommand{\blue}[1]{{#1}}
\newcommand{\real}{\mathbb{R}}
\newcommand{\trans}{\mathsf{T}} 
\newcommand{\mc}{\mathcal}
\DeclareSymbolFont{bbold}{U}{bbold}{m}{n}
\DeclareSymbolFontAlphabet{\mathbbold}{bbold}
\newcommand\oprocendsymbol{\hbox{$\square$}}
\newcommand\oprocend{\relax\ifmmode\else\unskip\hfill\fi\oprocendsymbol}
\newcommand*{\QEDA}{\hfill\ensuremath{\blacksquare}}%
\newcommand{\QED}{\hfill \mbox{\raggedright \rule{.1in}{.1in}}}
\newenvironment{proof}{\vspace{1ex}\noindent{\itshape Proof:}\hspace{0.5em}}
{\hfill\QED\vspace{1ex}}
\begin{document}
\begin{frontmatter}
  \title{{\bf On a Security vs Privacy Trade-off in Interconnected
      Dynamical Systems}}
      \tnotetext[t1]{This material is based upon work supported in part by ARO award 71603NSYIP and in part by UCOP award LFR-18-548175.}
      \tnotetext[t2]{\emph{Email addresses:} \texttt{vkatewa@iisc.ac.in} (Vaibhav Katewa), \texttt{\{rangu003, fabiopas\}@engr.ucr.edu} (Rajasekhar Anguluri, Fabio Pasqualetti)}


\author{Vaibhav Katewa$^1$, Rajasekhar Anguluri$^2$, Fabio Pasqualetti$^2$}

\address{$^1$Department of Electrical Communication Engineering, Indian Institute of Science, Bangalore}
\address{$^2$Department of Mechanical Engineering, University of California, Riverside, CA, USA}



\begin{abstract}
  We study a security problem for interconnected systems, where each
  subsystem aims to detect local attacks using local measurements and
  information exchanged with neighboring subsystems. The subsystems
  also wish to maintain the privacy of their states and, therefore,
  use privacy mechanisms that share limited or noisy information with
  other subsystems. We quantify the privacy level based on the
  estimation error of a subsystem's state and propose a novel
  framework to compare different mechanisms based on their privacy
  guarantees. We develop a local attack detection scheme without
  assuming the knowledge of the global dynamics, which uses local and
  shared information to detect attacks with provable
  guarantees. Additionally, we quantify a trade-off between security
  and privacy of the local subsystems. Interestingly, we show that,
  for some instances of the attack, the subsystems can achieve a
  better detection performance by being more private. We provide an
  explanation for this counter-intuitive behavior and illustrate our
  results through numerical examples.
\end{abstract}

 \begin{keyword}
 Privacy \sep Attack-detection \sep Interconnected Systems \sep Chi-squared test
 \end{keyword}
\end{frontmatter}

\section{Introduction}
Dynamical systems are becoming increasingly more distributed, diverse,
complex, and integrated with cyber components. Usually, these systems
are composed of multiple subsystems, which are interconnected among
each other via physical, cyber and other types of couplings
\cite{SMR-JPP-TKK:01}. An example of such system is the smart city,
which consists of subsystems such as the power grid, the
transportation network, the water distribution network, and
others. Although these subsystems are interconnected, it is usually
difficult to directly measure the couplings and dependencies between
them \cite{SMR-JPP-TKK:01}. As a result, they are often operated
independently without the knowledge of the other subsystems' models
and dynamics.

Modern dynamical systems are also increasingly more vulnerable to
cyber/physical attacks that can degrade their performance or may even
render them inoperable \cite{AAC-SA-SS:08b}. There have been many
recent studies on analyzing the effect of different types of attacks
on dynamical systems and possible remedial strategies (see
\cite{JG-ES-AAC-MM-MK:17} and the references therein).  A key
component of these strategies is detection of attacks using the
measurements generated by the system. Due to the autonomous nature of
the subsystems, each subsystem is primarily concerned with detection
of local attacks which affect its operation directly. However, local
attack detection capability of each subsystem is limited due to the
absence of knowledge of the dynamics and couplings with external
subsystems. One way to mutually improve the detection performance is
to share information and measurements among the subsystems. However,
these measurements may contain some confidential information about the
subsystem and, typically, subsystem operators may be willing to share
only limited information due to privacy concerns. In this paper, we
propose a privacy mechanism that limits the shared information and
characterize its privacy guarantees. Further, we develop a local
attack detection strategy using the local measurements and the limited
shared measurements from other subsystems. We also characterize the
trade-off between the detection performance and the amount/quality of
shared measurements, which reveals a counter-intuitive behavior of the involved chi-squared $(\chi^2)$ detection scheme.

\noindent \textbf{Related Work:} Centralized attack detection and
estimation schemes in dynamical systems have been studied in both
deterministic \cite{HF-PT-SD:14, FP-FD-FB:10y,YC-SK-JMFM:17} and
stochastic \cite{YM-BS:16, YC-SK-JMFM:18} settings. Recently, there
has also been studies on distributed attack detection including
information exchange among the components of a dynamical
system. Distributed strategies for attacks in power systems are
presented in \cite{HN-HI:14,SC-ZH-SK-TTK-HVP-AT:12,FD-FP-FB:11t}. In
\cite{FP-FD-FB:10y,FP-FD-FB:14}, centralized and decentralized monitor
design was presented for deterministic attack detection and
identification. In \cite{NF-GB-LC-SL-BW-BS:18, YG-XG:18}, distributed
strategies for joint attacks detection and state estimation are
presented. Residual based tests \cite{FB-AJG-GF-TP:17} and
unknown-input observer-based approaches \cite{AT-SH-KHJ:10} have also
been proposed for attack detection. A comparison between centralized
and decentralized attack detection schemes was presented in
\cite{RA-VK-FP:19}.The local detectors in \cite{RA-VK-FP:19} use only local measurements, whereas we allow the local detectors to use measurements from other subsystems as well.

Distributed fault detection techniques requiring information sharing
among the subsystems have also been widely studied. In
\cite{RMGF-TP-MMP:12, CK-MMP-TP:15, VP-MMP-CGP:15,
  XZ-QZ:12,XGY-CE:08}, fault detection for non-linear interconnected
systems is presented. These works typically use observers to estimate
the state/output, compute the residuals and compare them with
appropriate thresholds to detect faults. For linear systems,
distributed fault detection is studied using consensus-based
techniques in \cite{EF-ROS-TP-MMP:06, SS-NI-ZD-MS-KHJ:10} and
unknown-input observer-based techniques in \cite{IS-AMHT-HS-KHJ:11}.

There have also been recent studies related to privacy in dynamical
systems. Differential privacy based mechanisms in the context of
consensus, filtering and distributed optimization have been proposed
(see \cite{JC-GED-SH-JL-SM-GJP:16} and the references therein). These
works develop additive noise-based privacy mechanisms, and
characterize the trade-offs between the privacy level and the control
performance. Other privacy measures based on information theoretic
metrics like conditional entropy \cite{EA-CL-TB:15}, mutual
information \cite{FF-GN:16,TT-MS-HS-KHJ:17} and Fisher information
\cite{FF-HS:19} have also been proposed. In \cite{VK-FP-VG:16}, a
privacy vs. cooperation trade-off for multi-agent systems was
presented. In \cite{YM-RMM:17}, a privacy mechanism for consensus was
presented, where privacy is measured in terms of estimation error
covariance of the initial state. The authors in \cite{JG-AC-MK:17}
showed that the privacy mechanism can be used by an attacker to
execute stealthy attacks in a centralized setting.

In contrast to these works, we identify a novel and counter-intuitive trade-off between security and privacy in interconnected dynamical systems. In a preliminary version of this work \cite{VK-RA-FP:18a}, we compared
the detection performance between the cases when the subsystems share
full measurements (no privacy mechanism) and when they do not share
any measurements. In this paper, we introduce a privacy framework and
present an analytic characterization of privacy-performance
trade-offs.

\noindent \textbf{Contributions:} The main contributions of this paper
are as follows. First, we propose a privacy mechanism to keep the
states of a subsystem private from other subsystems in an
interconnected system. The mechanism limits both the amount and
quality of shared measurements by projecting them onto an appropriate
subspace and adding suitable noise to the measurements. This is in
contrast to prior works which use only additive noise for privacy. We
define a privacy ordering and use it to quantify and compare the
privacy of different mechanisms. Second, we propose and characterize
the performance of a chi-squared ($\chi^2$) attack detection scheme to
detect local attacks in absence of the knowledge of the global system
model. The detection scheme uses local and received measurements from
neighboring subsystems. Third, we characterize the trade-off between
the privacy level and the local detection performance \blue{in both qualitative and quantitative ways}. Interestingly,
our analysis shows that in some cases both privacy and detection
performance can be improved by sharing less information. This reveals
a counter-intuitive behavior of the widely used $\chi^2$ test for
attack detection \cite{YM-BS:16, YC-SK-JMFM:18,ASW:76}, which we
illustrate and explain.

\noindent \textbf{Mathematical notation:} $\text{Tr}(\cdot)$,
$\text{Im}(\cdot)$, $\text{Null}(\cdot)$ and\\ $\text{Rank}(\cdot)$
denote the trace, image, null space, and rank of a matrix,
respectively. $(\cdot)^{\trans}$ and $(\cdot)^{+}$ denote the
transpose and Moore-Penrose pseudo-inverse of a matrix. A positive
(semi)definite matrix $A$ is denoted by $A>0$ $(A\geq
0)$. $\text{diag}(A_1,A_2,\cdots,A_n)$ denotes a block diagonal matrix
whose block diagonal elements are $A_1,A_2,\cdots,A_n$.  The identity
matrix is denoted by $I$ (or $I_n$ to denote its dimension
explicitly). A scalar $\lambda \in\mathbb{C}$ is called a generalized
eigenvalue of $(A,B)$ if $(A-\lambda B)$ is singular. $\otimes$
denotes the Kronecker product. A zero mean Gaussian random variable
$y$ is denoted by $y\sim \mc{N}(0,\Sigma_y)$, where $\Sigma_y$ denotes
the covariance of $y$. The (central) chi-square distribution with $q$
degrees of freedom is denoted by $\chi_q^2$ and the noncentral
chi-square distribution with noncentrality parameter $\lambda$ is
denoted by $\chi_q^2(\lambda)$.  For $x\geq 0$, let $\mc{Q}_q(x)$ and
$\mc{Q}_q(x;\lambda)$ denote the right tail probabilities of a
chi-square and noncentral chi-square distributions, respectively.

\section{Problem Formulation}
We consider an interconnected discrete-time LTI dynamical system
composed of $N$ subsystems. Let $\mc{S}\triangleq \{1,2,\cdots,N\}$
denote the set of all subsystems and let
$\mc{S}_{-i}\triangleq \mc{S} \setminus \{i\}$, where $\setminus $
denotes the exclusion operator. The dynamics of the subsystems are
given by:
\begin{align} \label{eq:ss_state}
x_i(k+1) &= A_{i} x_i(k) + \blue{B_i} x_{-i}(k) + w_{i}(k), \\ \label{eq:ss_output}
y_i(k) &= C_{i}x_i(k) + v_{i}(k)  \qquad \quad i \in \mc{S},
\end{align}
where $x_{i} \in \real^{n_i}$ and $y_{i} \in \real^{p_i}$ are the
state and output/measurements of subsystem $i$, respectively. Let
$n \triangleq \sum_{i=1}^N n_i$. Subsystem $i$ is coupled with other
subsystems through the interconnection term $\blue{B_i} x_{-i}(k)$, where
$x_{-i} \triangleq [x_1^{\trans},\cdots, x_{i-1}^{\trans},
x_{i+1}^{\trans}, \cdots, x_{N}^{\trans}]^{\trans} \in \real^{n-n_i}$
denotes the states of all other subsystems. We refer to $x_{-i}$ as
the interconnection signal. Further, $w_i \in \real^{n_i}$ and $v_{i}
\in \real^{p_i}$ are the process and measurement noise,
respectively. We assume that $w_i(k) \sim \mc{N}(0,\Sigma_{w_{i}})$
and $v_i(k) \sim \mc{N}(0,\Sigma_{v_{i}})$ for all $k\geq 0$, with
$\Sigma_{w_{i}}>0$ and $\Sigma_{v_{i}}>0$. The process and measurement
noise are assumed to be white and independent for different
subsystems. Finally, we assume that the initial state  $x_i(0) \sim
\mc{N}(0,\Sigma_{x_i(0)})$ is independent of $w_i(k)$ and $v_i(k)$ for
all $k\geq 0$. We make the following assumption regarding the
interconnected system:

\noindent \emph{Assumption 1:} Subsystem $i$ has perfect knowledge of
its dynamics, i.e., it knows $(A_i, \blue{B_i}, C_i)$, the statistical
properties of $w_i$, $v_i$ and $x_i(0)$. However, it does not have
knowledge of the dynamics, states, and the statistical properties of
the noise of the other subsystems. \oprocend

\begin{remark} (\textbf{Control input}) The dynamics in
  \eqref{eq:ss_state} typically includes a control input. However,
  since each subsystem has the knowledge of its control input, its
  effect can be easily included in the attack detection
  procedure. Therefore, for the ease of presentation, we omit the
  control input.  \oprocend
\end{remark}

We consider the scenario where each subsystem can be under an
attack. We model the attacks as external linear additive inputs to the
subsystems. The dynamics of the subsystems under attack are given by
\begin{align}   \label{eq:ss_state_attack} 
x_i(k+1) &= A_{i} x_i(k) \!+\! \blue{B_i} x_{-i}(k)\! +\! \underbrace{B_i^{a} \tilde{a}_i(k)}_{\textstyle \triangleq a_i(k)} + w_{i}(k),  \\ \label{eq:ss_output_attack}
y_i(k) &= C_{i}  x_i(k) + v_{i}(k)  \qquad  i \in \mc{S},
\end{align}
where $\tilde{a}_i \in \real^{r_i}$ is the local attack input for
Subsystem $i$, which is assumed to be a deterministic but unknown
signal for all $i\in\mc{S}$. The matrix $B_i^{a}$ dictates how the
attack $\tilde{a}_i$ affects the state of Subsystem $i$, which we
assume to be unknown to Subsystem $i$.

Each subsystem is equipped with an attack monitor whose goal is to
detect the local attack using the local measurements. Since Subsystem
$i$ does not know $B_i^{a}$, it can only detect
$a_i = B_i^{a} \tilde{a}_i$. The detection procedure requires the
knowledge of the statistical properties of $y_i$ which depend on the
interconnection signal $x_{-i}$. Since the subsystems do not have
knowledge of the interconnection signals (c.f. \emph{Assumption 1}),
they share their measurements among each other to aid the local
detection of attacks (see Fig. \ref{fig:ISS}). The details of how
these shared measurements are used for attack detection are presented
in Section \ref{sec:attack_detection}.

\begin{figure}[t!]
\centering
\includegraphics[width=\columnwidth]{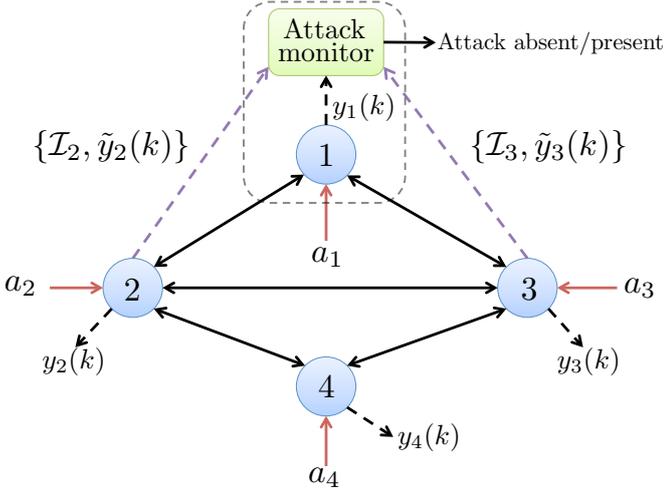}
\caption{An interconnected system consisting of $N=4$ subsystems. The
  solid lines represent state coupling among the subsystems. For
  attack detection by Subsystem $1$, its neighboring agents $2$ and
  $3$ communicate their output information to $1$ (denoted by dashed
  lines). The attack monitor associated with Subsystem $1$ uses the
  received information and the local measurements to detect attacks.}
\label{fig:ISS}
\end{figure}

While the shared measurements help in detecting local attacks, they
can reveal sensitive information of the subsystems. For instance, some
of the states/outputs of a subsystem may be confidential, which it may
not be willing to share with other subsystems. To protect the privacy
of such states/outputs, we propose a privacy mechanism $\mathcal{M}_i$
through which a subsystem limits the amount and quality of its shared
measurements.  Thus, instead of sharing the complete measurements in
\eqref{eq:ss_output_attack}, Subsystem $i$ shares limited measurements
(denoted as $\tilde{y}_{i}$) given by:
\begin{align} \label{eq:limited_meas}
  \mc{M}_i: \qquad \tilde{y}_{i}(k) &= S_i y_i(k) + \tilde{r}_i(k) \nonumber \\
                                    &= S_i C_i x_i(k) + S_i v_i(k)+
                                      \tilde{r}_{i}(k),
\end{align} 
where $S_i \in \real^{m_i \times p_i}$ is a selection matrix suitably
chosen to select a subspace of the outputs, and
$\tilde{r}_{i}(k) \sim \mc{N}(0,\Sigma_{\tilde{r}_{i}})$ is an
artificial white noise (independent of $w_i$ and $v_i$) added to
introduce additional inaccuracy in the shared measurements. Without
loss of generality, we assume $S_i$ to be full row rank for all
$i\in\mc{S}$. Thus, a subsystem can limit its shared measurement via a
combination of the following two mechanisms (i) by sharing fewer (or a
subspace of) measurements, and (ii) by sharing more noisy
measurements. Intuitively, when Subsystem $i$ limits its shared
measurements, the estimates of its states/outputs computed by the
other subsystems become more inaccurate. This prevents other
subsystems from accurately determining the confidential states/outputs
of Subsystem $i$, thereby protecting its privacy. We will explain this
phenomenon in detail in the next section.

Let the parameters corresponding to the limited measurements of
subsystem $i$ be denoted by
$\mc{I}_i \triangleq$\\
$\{C_i,S_i,\Sigma_{v_i},\Sigma_{\tilde{r}_{i}}\}$. We make the
following assumption:

\noindent \emph{Assumption 2}: Each subsystem $i\in \mc{S}$ shares its
limited measurements $\tilde{y}_i$ in \eqref{eq:limited_meas} and the
parameters $\mc{I}_i$ with all subsystems
$j\in\mc{S}_{-i}. $\footnote{To be precise, this information sharing
  is required only between \emph{neighboring} subsystems, i.e.,
  between subsystems that are directly coupled with each other in
  \eqref{eq:ss_state}.} \oprocend

Under \emph{Assumptions 1} and \emph{2}, the goal of each subsystem
$i$ is to detect the local attack $a_i$ using its local measurements
$y_i$ and the limited measurements
$\{\tilde{y}_{j}\}_{ j\in \mc{S}_{-i}}$ received from the other
subsystems (see Fig.  \ref{fig:ISS}). \blue{Further, we are interested in characterizing the trade-off between the privacy level and the detection performance.}

\section{Local attack detection} \label{sec:attack_detection}
In this section, we present the local attack detection procedure of the subsystems and characterize their detection performance. For the ease or presentation, we describe the analysis for Subsystem $1$ and remark that the procedure is analogous for the other subsystems. 

\subsection{Measurement collection}
We employ a batch detection scheme in which each subsystem collects the measurements for $k = 1,2,\cdots,T$, with $T>0$, and performs detection based on the collective measurements. 
In this subsection, we model the collected local and shared measurements for Subsystem $1$.

\noindent \textbf{Local measurements:} Let the time-aggregated local measurements, interconnection signals, attacks, process noise and measurement noise corresponding to Subsystem $1$ be respectively denoted by
\begin{align*}
y_L &\triangleq [y_1^{\trans}(1), y_1^{\trans}(2),\cdots,y_1^{\trans}(T)]^{\trans}, \\
 x &\triangleq [x_{-1}^{\trans}(0), x_{-1}^{\trans}(1),\cdots, x_{-1}^{\trans}(T-1)]^{\trans},\\
 \tilde{a} & \triangleq [\tilde{a}_1^{\trans}(0), \tilde{a}_1^{\trans}(1),\cdots,\tilde{a}_1^{\trans}(T-1)]^{\trans},\\
  w &\triangleq [w_1^{\trans}(0), w_1^{\trans}(1),\cdots,w_1^{\trans}(T-1)]^{\trans},\\
  v &\triangleq [v_1^{\trans}(1), v_1^{\trans}(2),\cdots,v_1^{\trans}(T)]^{\trans}, \quad \text{and let}\\ \addtocounter{equation}{1} \tag{\theequation} \label{eq:F_matrix}
  F(Z) &\triangleq  \begin{bmatrix}
	C_1Z & 0 & \cdots & 0\\
	C_1A_1Z & C_1Z & \cdots & 0\\
	\vdots & \vdots & \ddots & \vdots\\
	C_1A_{1}^{T-1}Z  & C_1A_{1}^{T-2}Z & \cdots & C_1Z
	\end{bmatrix}\\ &= F(I) (I_T\otimes Z).
\end{align*}

By using \eqref{eq:ss_state_attack} recursively and \eqref{eq:ss_output_attack}, the local measurements can be written as 
\begin{align} \label{eq:block_meas}
y_L &= O x_1(0) + F_x x +  F_{\tilde{a}} \tilde{a} + F_w w + v, \\ \nonumber
\text{where}\quad F_x &= F(\blue{B_1}),\: F_{\tilde{a}} = F(B_1^a), \:F_w = F(I), \quad \text{and}\\ \nonumber
	O  &\triangleq \begin{bmatrix} (C_1A_{1})^{\trans} & (C_1A_{1}^2)^{\trans} & \cdots & (C_1 A_{1}^{T})^{\trans} \end{bmatrix}^{\trans}.
\end{align} 
Note that $w \sim \mc{N}(0,\Sigma_w)$ and  $v \sim \mc{N}(0,\Sigma_v)$ with
\begin{align*}
 \Sigma_w = I_T \otimes \Sigma_{w_{1}} > 0\quad  \text{and} \quad  \Sigma_v = I_T \otimes \Sigma_{v_{1}}>0. 
 \end{align*}
Let $ v_L \triangleq O x_1(0) + F_w w + v$ denote the effective local noise in the measurement equation \eqref{eq:block_meas}. Using the fact that $(x_1(0),w,v)$ are independent, the overall local measurements of the subsystem are given by
\begin{align} \label{eq:block_meas1}
y_L &= F_x x + F_{\tilde{a}} \tilde{a} + v_L,   \quad \text{where}\\ \nonumber
v_L&\sim \mc{N}(0,\Sigma_{v_L}),\: \Sigma_{v_L} = O \Sigma_{x_1(0)} O^{\trans} + F_w \Sigma_{w} F_w^{\trans} + \Sigma_{v}>0.
\end{align}

\noindent \textbf{Shared measurements:} Let $\tilde{y}_{-1}(k) \triangleq$ \\ $[\tilde{y}_2^{\trans}(k), \tilde{y}_3^{\trans}(k),\cdots, \tilde{y}_N^{\trans}(k)]^{\trans}$ denote the limited measurements  received by Subsystem $1$ from all the other subsystems at time $k$. Further, let $v_{-1}(k)$ and $\tilde{r}_{-1}(k)$ denote similar aggregated vectors of $\left\{v_j(k) \right\}_{j\in \mc{S}_{-1}}$ and $\left\{\tilde{r}_j(k) \right\}_{j\in \mc{S}_{-1}}$, respectively.
Then, from \eqref{eq:limited_meas} we have
\begin{align} \label{eq:share_meas_1}
&\tilde{y}_{-1}(k) = S_{-1}C_{-1} x_{-1}(k) + S_{-1} v_{-1}(k) +  \tilde{r}_{-1}(k),\\ \nonumber
&\text{where} \:\:\: S_{-1}\! \triangleq \text{diag}(S_2,\cdots,S_N),  C_{-1}\! \triangleq \text{diag}(C_2,\cdots,C_N),\\ \nonumber
&v_{-1}(k)\sim \mc{N}(0,\Sigma_{v_{-1}}),\:   \Sigma_{v_{-1}} = \text{diag}(\Sigma_{v_2},\cdots,\Sigma_{v_N})>0,\\ \nonumber
&\tilde{r}_{-1}(k)\sim \mc{N}(0,\Sigma_{\tilde{r}_{-1}}),\:   \Sigma_{\tilde{r}_{-1}} = \text{diag}(\Sigma_{\tilde{r}_2},\cdots,\Sigma_{\tilde{r}_N})\geq 0.
\end{align}

Further, let the time-aggregated limited measurements received by Subsystem $1$ be denoted by $y_R  \triangleq$ \\ $[\tilde{y}_{-1}^{\trans}(0),\tilde{y}_{-1}^{\trans}(1),\cdots, \tilde{y}_{-1}^{\trans}(T-1)]^{\trans}$, and let $v_R$ denote similar time-aggregated vector of $\left\{S_{-1} v_{-1}(k) +  \tilde{r}_{-1}(k)\right\}_{k=0,\cdots, T-1}$. Then, from \eqref{eq:share_meas_1}, the overall limited measurements received by Subsystem $1$ read as 
\begin{align} \label{eq:share_meas_2}
y_R &= H x + v_R,    \quad \text{where}\\ \nonumber
H &\triangleq I_T \otimes S_{-1}C_{-1}, \quad \text{and} \quad v_R\sim \mc{N}(0,\Sigma_{v_R}) \\ \nonumber
\text{with}\quad  \Sigma_{v_R} &= I_T \otimes (S_{-1}\Sigma_{v_{-1}} S_{-1}^{\trans} + \Sigma_{\tilde{r}_{-1}})>0.
\end{align}
The goal of Subsystem $1$ is to detect the local attack using the local and received measurements given by \eqref{eq:block_meas1} and \eqref{eq:share_meas_2}, respectively. 

\subsection{Measurement processing}
Since Subsystem $1$ does not have access to the interconnection signal
$x$, it uses the received measurements to obtain an estimate of
$x$. Note that Subsystem $1$ is oblivious to the statistics of the stochastic
signal $x$. Therefore, it computes an estimate of $x$
assuming that $x$ is a deterministic but unknown quantity.

\blue{According to \eqref{eq:share_meas_2}, $y_R \sim \mc{N}(H x, \Sigma_{v_R})$, and} the Maximum Likelihood (ML) estimate of $x$ based on $y_R$
\blue{is computed by maximizing the log-likelihood function of $y_R$, and is given by:}
\begin{align}
\blue{\hat{x}} &\blue{= \arg \underset{z}{\max} \quad -\frac{1}{2} (y_R-H z)^{\trans}  \Sigma_{v_R}^{-1} (y_R-H z)} \nonumber \\ \label{eq:x_estimate}
\begin{split}
&\overset{\blue{(a)}}{=} \tilde{H}^{+} H^{\trans} \Sigma_{v_R}^{-1} y_R + (I-\tilde{H}^{+}\tilde{H}) d, \quad\text{where}\\ 
\tilde{H} &\triangleq H^{\trans} \Sigma_{v_R}^{-1} H \geq 0,
\end{split}
\end{align} 
$d$ is any real vector of appropriate dimension, \blue{and equality $(a)$ follows from Lemma \ref{lem:WLS} in the Appendix.
If $\tilde{H}$ (or equivalently $H$) is not full column rank, then
the estimate can lie anywhere in Null($\tilde{H}$) = Null($H$)
(shifted by
$\tilde{H}^{+} H^{\trans} \Sigma_{v_R}^{-1} y_R$). Thus,
the component of $x$ that lies in Null($H$) cannot be estimated
and only the component of $x$ that lies in Im($\tilde{H}$) =
Im($H^{\trans}$) can be estimated. Based on this discussion,} we
decompose $x$ as
\begin{align}
x &=  (I-\tilde{H}^{+}\tilde{H}) x + \tilde{H}^{+}\tilde{H} x \nonumber\\
&= (I-\tilde{H}^{+}\tilde{H}) x + \tilde{H}^{+} H^{\trans} \Sigma_{v_R}^{-1} H x\nonumber\\ \label{eq:x_decompose}
& \overset{\eqref{eq:share_meas_2}}{=} (I-\tilde{H}^{+}\tilde{H}) x + \tilde{H}^{+} H^{\trans} \Sigma_{v_R}^{-1}  (y_R - v_R).
\end{align}
Substituting $x$ from \eqref{eq:x_decompose} in \eqref{eq:block_meas1}, we get
\begin{align} \label{eq:block_meas2}
y_L &= F_x (I-\tilde{H}^{+}\tilde{H}) x + F_x \tilde{H}^{+} H^{\trans} \Sigma_{v_R}^{-1}  (y_R - v_R) \nonumber \\
 &+ F_{\tilde{a}} \tilde{a} + v_L.
\end{align}

Next, we process the local measurements in two steps. First, we
subtract the known term
$F_x \tilde{H}^{+} H^{\trans} \Sigma_{v_R}^{-1} y_R$. Second, we
eliminate the component $(I-\tilde{H}^{+}\tilde{H}) x$ (which cannot
be estimated) by premultiplying \eqref{eq:block_meas2} with a matrix
$M^{\trans}$, where
\begin{align} 
M &= \text{Basis of Null}\left( [F_x(I-\tilde{H}^{+}\tilde{H})]^{\trans}\right), \nonumber \\ \label{eq:left_null_matrix}
&\Rightarrow M^{\trans} F_x(I-\tilde{H}^{+}\tilde{H}) = 0.
\end{align}

Since the columns of $M$ are basis vectors, $M$ is full
column rank. The processed measurements are given by
\begin{align} 
z &= M^{\trans}(y_L -  F_x \tilde{H}^{+} H^{\trans} \Sigma_{v_R}^{-1}  y_R ) \nonumber \\ \label{eq:process_meas}
& \overset{\eqref{eq:block_meas2},\eqref{eq:left_null_matrix}}{=} M^{\trans} F_{\tilde{a}} \tilde{a} + \underbrace{M^{\trans}(v_L-F_x \tilde{H}^{+} H^{\trans} \Sigma_{v_R}^{-1}  v_R),}_{\textstyle \triangleq v_P}
\end{align}
where $v_P\sim \mc{N}(0,\Sigma_{v_P})$. The random variables $v_L$ and
$v_R$ are independent because they depend exclusively on the local and
external subsystems' noise, respectively. Using this fact
\begin{align}
\Sigma_{v_P} &=M^{\trans} \left[\Sigma_{v_L} + F_x \tilde{H}^{+} H^{\trans} \Sigma_{v_R}^{-1} \Sigma_{v_R} \Sigma_{v_R}^{-\trans} H (\tilde{H}^{+})^{\trans} F_x^{\trans}\right]M \nonumber \\ \label{eq:z_noise_var}
& \overset{\tilde{H}^{\trans} = \tilde{H}}{=} M^{\trans} \Sigma_{v_L} M + M^{\trans}F_x \tilde{H}^{+}F_x^{\trans} M\overset{(a)}{>}0,
\end{align}
where $(a)$ follows from the facts that $M$ is full column rank and
$\Sigma_{v_L}>0$. The processed measurements $z$ in
\eqref{eq:process_meas} depend only on the local attack $\tilde{a}$,
and the Gaussian noise $v_P$ whose statistics is known to Subsystem
$1$ (c.f. \emph{Assumptions 1 and 2}), i.e.
$z\sim \mc{N}(M^{\trans}F_{\tilde{a}} \tilde{a},\Sigma_{v_P})$. Thus,
Subsystem $1$ uses $z$ to perform attack detection. Note that the
attack vectors that belong to Null($M^{\trans}F_{\tilde{a}}$) cannot
be detected.

The operation of elimination of the unknown component
$(I-\tilde{H}^{+}\tilde{H}) x$ from $y_L$ also eliminates a component
of the attack $\tilde{a}$. As a result, this operation increases the
space of undetectable attack vectors from Null($F_{\tilde{a}}$) to
Null($M^{\trans}F_{\tilde{a}}$). In some cases, this operation could
also result in complete elimination of attacks as shown in the next
result.
\begin{lemma}
  Consider equation \eqref{eq:ss_state_attack} and the limited
  measurements in \eqref{eq:limited_meas}, and let $S_{-1},C_{-1},M$
  be defined in \eqref{eq:share_meas_1} and
  \eqref{eq:left_null_matrix}. If
\begin{align} \label{eq:undet_attack_lemma_cond}
\textnormal{Im}(B_1^{a}) \subseteq \textnormal{Im}\left(\blue{B_1}\left[I-(S_{-1}C_{-1})^{+}(S_{-1}C_{-1})\right]\right),
\end{align} 
then $M^{\trans}F_{\tilde{a}} = 0$.
\end{lemma}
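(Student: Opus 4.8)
The plan is to exploit the Kronecker factorization $F(Z) = F(I)(I_T \otimes Z)$ from \eqref{eq:F_matrix} together with the defining property $M^{\trans} F_x(I - \tilde{H}^{+}\tilde{H}) = 0$ in \eqref{eq:left_null_matrix}. Writing $W \triangleq S_{-1}C_{-1}$ and $P \triangleq I - W^{+} W$, the hypothesis \eqref{eq:undet_attack_lemma_cond} reads $\textnormal{Im}(B_1^{a}) \subseteq \textnormal{Im}(B_1 P)$, so there is a matrix $K$ with $B_1^{a} = B_1 P K$. If I can show that the projector appearing in the definition of $M$ factors as $I - \tilde{H}^{+}\tilde{H} = I_T \otimes P$, then the result follows by a short chain of Kronecker manipulations: $F_{\tilde{a}} = F(B_1^{a}) = F(I)(I_T \otimes B_1 P K) = F(I)(I_T \otimes B_1 P)(I_T \otimes K) = F_x(I_T \otimes P)(I_T \otimes K) = F_x(I - \tilde{H}^{+}\tilde{H})(I_T \otimes K)$, and premultiplying by $M^{\trans}$ annihilates the leading factor by \eqref{eq:left_null_matrix}, giving $M^{\trans} F_{\tilde{a}} = 0$.

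The key step, and the main obstacle, is the identity $I - \tilde{H}^{+}\tilde{H} = I_T \otimes P$. First I would use the block structure $\tilde{H} = I_T \otimes G$ with $G \triangleq W^{\trans} \Sigma_0^{-1} W$ and $\Sigma_0 \triangleq S_{-1}\Sigma_{v_{-1}} S_{-1}^{\trans} + \Sigma_{\tilde{r}_{-1}} > 0$, which follows from $H = I_T \otimes W$ and $\Sigma_{v_R} = I_T \otimes \Sigma_0$ in \eqref{eq:share_meas_2}. Using the Kronecker pseudoinverse rule $(I_T \otimes G)^{+} = I_T \otimes G^{+}$, I obtain $\tilde{H}^{+}\tilde{H} = I_T \otimes (G^{+} G)$, so it suffices to prove $G^{+} G = W^{+} W$.

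Both $G^{+} G$ and $W^{+} W$ are the orthogonal projectors onto $\textnormal{Im}(G^{\trans})$ and $\textnormal{Im}(W^{\trans})$, respectively, so they coincide if and only if $\textnormal{Null}(G) = \textnormal{Null}(W)$. The inclusion $\textnormal{Null}(W) \subseteq \textnormal{Null}(G)$ is immediate. For the reverse, if $Gx = 0$ then $0 = x^{\trans} G x = (Wx)^{\trans} \Sigma_0^{-1}(Wx)$, and since $\Sigma_0^{-1} > 0$ this forces $Wx = 0$; hence $\textnormal{Null}(G) = \textnormal{Null}(W)$ and the two projectors agree. This is precisely where positive definiteness of $\Sigma_0$ (guaranteed by $\Sigma_{v_R} > 0$) is essential: without it the space eliminated by $M$ would not match $\textnormal{Im}(B_1 P)$. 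With $G^{+} G = W^{+} W$ in hand, we get $I - \tilde{H}^{+}\tilde{H} = I_T \otimes (I - W^{+}W) = I_T \otimes P$, and the factorization chain above concludes the proof.
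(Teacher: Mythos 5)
Your proof is correct and follows essentially the same route as the paper: reduce $I-\tilde{H}^{+}\tilde{H}$ to $I_T\otimes\bigl(I-(S_{-1}C_{-1})^{+}S_{-1}C_{-1}\bigr)$ via the Kronecker structure, factor $B_1^{a}$ through the hypothesis, and annihilate with \eqref{eq:left_null_matrix}. The only difference is that you explicitly verify $\textnormal{Null}(\tilde{H})=\textnormal{Null}(H)$ using $\Sigma_{v_R}>0$, a fact the paper simply invokes from its earlier discussion.
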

\begin{proof}
Since Null($\tilde{H}$) = Null($H$), we have 
\begin{align*}
\tilde{H}^{+}\tilde{H} = H^{+}H = I_T \otimes (S_{-1}C_{-1})^{+}(S_{-1}C_{-1}).
\end{align*} 
Let $Z \triangleq (S_{-1}C_{-1})^{+}(S_{-1}C_{-1})$. Then, substituting $F_x$ from \eqref{eq:block_meas} in \eqref{eq:left_null_matrix}, we get
\begin{align} \label{eq:undet_attack_pf_1}
&M^{\trans} F(I) (I_T \otimes \blue{B_1})[I-I_{T}\otimes Z] = 0 \nonumber \\
\Rightarrow & M^{\trans}  F(I) (I_T  \otimes  \blue{B_1}) [I_{T} \otimes  (I -  Z)]  = 0 \nonumber \\
\Rightarrow &  M^{\trans} F(I) (I_T \otimes  \blue{B_1}[I - Z]) = 0.
\end{align}
If \eqref{eq:undet_attack_lemma_cond} holds, then there exists a matrix $P$ such that $B_1^{a} = \blue{B_1}[I-Z]P$.
Thus, from \eqref{eq:block_meas}, we have
\begin{align*}
M^{\trans} F_{\tilde{a}} &= M^{\trans}F(I) (I_T \otimes \blue{B_1}[I-Z]P )\\
& = M^{\trans}F(I) (I_T \otimes \blue{B_1}[I-Z]) (I_T \otimes P) \overset{\eqref{eq:undet_attack_pf_1}}{=} 0.
\end{align*}
\end{proof}

The above result has the following intuitive interpretation: if the
attacks lie in the subspace of the interconnections that cannot be
estimated, then eliminating these interconnections also eliminates the
attacks. In this case, the processed measurements do not have any
signature of the attacks, which, therefore, cannot be detected. This
result highlights the limitation of our measurement processing
procedure. Next, we illustrate the result using an example.

\begin{figure}[t!]
\centering
\includegraphics[scale=0.2]{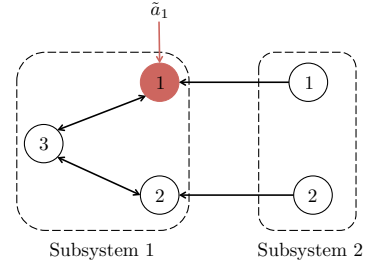}
\caption{An interconnected system consisting of two subsystems. The
  nodes denote the states of the subsystems and solid edges denote the
  couplings and interconnections of Subsystem $1$ (self edges are
  omitted). The attacked node is shaded in red.}
\label{fig:2subsys_example}
\end{figure}

\begin{example}
  Consider an interconnected subsystem consisting of two subsystems
  with the following parameters (see Fig. \ref{fig:2subsys_example}):
\begin{align*}
\small A_1 = \begin{bmatrix} 1 & 0 & -1\\ 0 & 1 & -1 \\ 1 & 1 &1 \end{bmatrix}, \:\:\blue{B_1} = \begin{bmatrix} 1 & 0\\ 0 & 1  \\ 0 & 0 \end{bmatrix},\:\: B_1^{a} = \begin{bmatrix} 1 \\ 0 \\ 0 \end{bmatrix},
\end{align*}
$C_1 = I_3, C_2 = I_2$ and $T=1$. We have $F_x = \blue{B_1}$ and $F_{\tilde{a}} = B_1^{a}$. Consider the following two cases:\\
\noindent \emph{Case (i)}: Subsystem $2$ shares its 2nd state, i.e., $S_2 = S_{-1} = \begin{bmatrix} 0 & 1 \end{bmatrix}$. In this case, Subsystem $1$ does not get information about the interconnection affecting its 1st state and the elimination of this interconnection also eliminates the attack. It can be verified that 
$M = \left[ \begin{smallmatrix}0 & 1 & 0 \\ 0 & 0 & 1 \end{smallmatrix} \right]^{\trans}$ and $M^{\trans}B_1^{a} = 0$.\\
\noindent   \emph{Case (ii)}: Subsystem $2$ shares its 1st state, i.e., $S_2 = S_{-1} = \begin{bmatrix} 1 & 0 \end{bmatrix}$. In this case, Subsystem $1$ gets information about the interconnection affecting its 1st state. Thus, its elimination is not required and this preserves the attack. It can be verified that 
$M = \left[ \begin{smallmatrix}1 & 0 & 0 \\ 0 & 0 & 1 \end{smallmatrix} \right]^{\trans}$ and $M^{\trans}B_1^{a} \neq 0$.
\end{example}  

\subsection{Statistical hypothesis testing}
The goal of Subsystem $1$ is to determine whether it is under attack
or not using the processed measurements $z$ in
\eqref{eq:process_meas}. Recall that, since Subsystem 1 does not know
$B_1^{a}$, it can only detect $a_1 = B_1^{a} \tilde{a}_1$. Let
$a \triangleq$ \\ $[ (B_1^{a}\tilde{a}_1(0))^{\trans}, \cdots,
(B_1^{a}\tilde{a}_1(T-1))^{\trans}]^{\trans}$. Then, from
\eqref{eq:block_meas}, we have $F_{\tilde{a}}\tilde{a} = F_a a$, where
$F_a = F(I)$. Thus, processed measurements are distributed
according to $z\sim \mc{N}(M^{\trans}F_{a} a,\Sigma_{v_P})$.  We cast
the attack detection problem as a binary hypothesis testing
problem. Since Subsystem $1$ does not know the attack $a$, we consider
the following \emph{composite} (simple vs. composite) testing problem
\begin{align*}
&H_0: \quad  a = 0    \quad \text{(Attack absent)} \qquad   \text{vs} \\
&H_1: \quad  a \neq 0 \quad \text{(Attack present)}
\end{align*} 
We use the Generalized Likelihood Ratio Test (GLRT) criterion
\cite{LW:04} for the above testing problem, which is given by
\begin{align} \label{eq:GLRT}
&\frac{f(z|H_0)}{\underset{a}{\sup} f(z|H_1)}\overset{H_0}{\underset{H_1}{\gtrless}} \tau' \quad \text{where,} \\ \nonumber
f(z|H_0) &= \frac{1}{\sqrt{2\pi |\Sigma_{v_P}|}} e^{-\frac{1}{2} z^{\trans} \Sigma_{v_P}^{-1} z}  \quad \text{and,} \\ \nonumber
f(z|H_1) &= \frac{1}{\sqrt{2\pi |\Sigma_{v_P}|}} e^{-\frac{1}{2} (z-M^{\trans}F_a a)^{\trans} \Sigma_{v_P}^{-1} (z-M^{\trans}F_a a)},
\end{align}
are the probability density functions of the multivariate Gaussian
distribution of $z$ under hypothesis $H_0$ and $H_1$, respectively,
and $\tau'$ is a suitable threshold. Using the result in Lemma
\ref{lem:WLS} \blue{in the Appendix} to compute the denominator in \eqref{eq:GLRT} and taking
the logarithm, the test \eqref{eq:GLRT} can be equivalently written as
\begin{align}\label{eq:GLRT_2}
t(z) &\triangleq z^{\trans} \Sigma_{v_P}^{-1} M^{\trans}F_a \tilde{M}^{+} F_a^{\trans} M \Sigma_{v_P}^{-1} z \overset{H_1}{\underset{H_0}{\gtrless}} \tau, \\ \nonumber
 \text{where}& \quad \tilde{M} = F_a^{\trans} M \Sigma_{v_P}^{-1} M^{\trans}F_a,
\end{align}
and $\tau\geq 0$ is the threshold. The above test is a $\chi^2$ test
since the test statistics $t(z)$ follows a chi-squared distribution
(see Lemma \ref{lem:ts_distribution}). The next result simplifies the
test statistics $t(z)$ and provides an interpretation of the test.
\begin{lemma} \label{lem:test_stat_simpl} \textbf{(Simplification of test statistics)}
Let $\Sigma_{v_P}^{-1} = R^{\trans} R$ denote the Cholesky decomposition of $\Sigma_{v_P}^{-1}$. Then,
\begin{align} \label{eq:test_stat_simpl}
\Sigma_{v_P}^{-1} M^{\trans}F_a \tilde{M}^{+} F_a^{\trans} M \Sigma_{v_P}^{-1} = R^{\trans} U U^{\trans} R,
\end{align}
where $U$ is a matrix whose columns are the orthonormal basis vectors of $\textnormal{Im}(RM^{\trans}F_a)$.
\end{lemma}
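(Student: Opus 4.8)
The plan is to strip the weighting matrix $\Sigma_{v_P}^{-1}$ out of both sides by absorbing its Cholesky factor, reducing the claimed identity to the statement that a certain matrix is an orthogonal projector, and then to identify that projector with $UU^{\trans}$. First I would abbreviate $W \triangleq R M^{\trans} F_a$, which is exactly the matrix whose image is referenced in the statement. Since $\Sigma_{v_P}>0$ by \eqref{eq:z_noise_var}, its Cholesky factor $R$ is square and invertible. Substituting $\Sigma_{v_P}^{-1}=R^{\trans}R$ into the definition of $\tilde{M}$ gives
\begin{align*}
\tilde{M} = F_a^{\trans} M\, R^{\trans} R\, M^{\trans} F_a = W^{\trans} W .
\end{align*}

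Next I would substitute $\Sigma_{v_P}^{-1}=R^{\trans}R$ throughout the left-hand side of \eqref{eq:test_stat_simpl}. Using $R M^{\trans} F_a = W$ together with its transpose $F_a^{\trans} M R^{\trans} = W^{\trans}$, the expression collapses to
\begin{align*}
\Sigma_{v_P}^{-1} M^{\trans} F_a\, \tilde{M}^{+}\, F_a^{\trans} M \Sigma_{v_P}^{-1}
= R^{\trans} \big[\, W (W^{\trans} W)^{+} W^{\trans} \,\big] R .
\end{align*}
Thus the lemma reduces to the single claim $W (W^{\trans} W)^{+} W^{\trans} = U U^{\trans}$, where the columns of $U$ form an orthonormal basis of $\textnormal{Im}(W)=\textnormal{Im}(RM^{\trans}F_a)$.

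To establish this, I would invoke the standard fact that $W(W^{\trans}W)^{+}W^{\trans}$ is the orthogonal projector onto $\textnormal{Im}(W)$. The cleanest verification is via the compact singular value decomposition $W = U S V^{\trans}$, where $U$ has orthonormal columns spanning $\textnormal{Im}(W)$, $S$ is the diagonal matrix of nonzero singular values, and $V$ has orthonormal columns. Then $W^{\trans}W = V S^2 V^{\trans}$, so $(W^{\trans}W)^{+}=V S^{-2} V^{\trans}$, and direct multiplication yields $W(W^{\trans}W)^{+}W^{\trans} = U S V^{\trans} V S^{-2} V^{\trans} V S U^{\trans} = U U^{\trans}$. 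Substituting back gives $R^{\trans} U U^{\trans} R$, which is the desired right-hand side.

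I do not expect a serious obstacle here; the only point requiring care is to confirm that the $U$ produced by the SVD argument agrees with the $U$ named in the statement. This holds because the orthogonal projector onto a subspace is unique and equals $UU^{\trans}$ for any matrix $U$ whose columns are an orthonormal basis of that subspace, so the particular choice is immaterial. The other step worth flagging is that the factor-out-$R$ manipulation is legitimate precisely because $\Sigma_{v_P}>0$ makes $R$ invertible; without positive definiteness of $\Sigma_{v_P}$ the reduction to a pure projector would fail.
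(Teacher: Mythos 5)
Your proof is correct and follows essentially the same route as the paper: both absorb the Cholesky factor to reduce the identity to showing that $W(W^{\trans}W)^{+}W^{\trans}$ (with $W=RM^{\trans}F_a$) is the orthogonal projector onto $\mathrm{Im}(W)$, hence equals $UU^{\trans}$. The only cosmetic difference is that the paper verifies this via the pseudo-inverse identities $(W^{\trans}W)^{+}=W^{+}(W^{+})^{\trans}$ and $WW^{+}=UU^{\trans}$, whereas you verify it by a compact SVD computation; both are standard and equivalent.
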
    
\begin{proof}
Let $M_1 \triangleq M^{\trans} F_a$. Then
\begin{align*}
\tilde{M}^{+} = (M_1^{\trans} R^{\trans} R M_1)^{+} = (R M_1)^{+}  ((R M_1)^{+})^{\trans}.
\end{align*}
Thus, we have
\begin{align*}
\Sigma_{v_P}^{-1} & M^{\trans}F_a \tilde{M}^{+} F_a^{\trans} M \Sigma_{v_P}^{-1} \\
 &= (R^{\trans} R) M_1 (R M_1)^{+} ((R M_1)^{+})^{\trans} M_1^{\trans} (R^{\trans} R) \\
 &= R^{\trans} (R M_1) (R M_1)^{+}  (RM_1) (R M_1)^{+} R \\
 &= R^{\trans} (R M_1) (R M_1)^{+} R.
\end{align*} 
Since $R M_1 (R M_1)^{+}$ is the orthogonal projection operator on $\text{Im}(R M_1)$, $ R M_1 (R M_1)^{+} = U U^{\trans}$, and the proof is complete.
\end{proof}

Using Lemma \ref{lem:test_stat_simpl}, the test \eqref{eq:GLRT_2} can be written as 
\begin{align} \label{eq:GLRT_3}
t(z) = z^{\trans} R^{\trans} U U^{\trans} R z \overset{H_1}{\underset{H_0}{\gtrless}} \tau .
\end{align}
Thus, the test compares the energy of the signal $U^{\trans}Rz$ with a
given threshold to detect the attacks.  Next, we derive the
distribution of the test statistics under both hypothesis.
\begin{lemma} \label{lem:ts_distribution} \textbf{(Distribution of test statistics)}
The distribution of test statistics $t(z)$ in \eqref{eq:GLRT_3} is given by
\begin{align} \label{eq:dist_H0}
 t(z) &\sim \chi_q^2  \quad \text{under $H_0$},  \\ \label{eq:dist_H1}
 t(z) & \sim \chi_q^2(\lambda \triangleq a^{\trans} \Lambda a) \quad \text{under $H_1$},
 \end{align}
 where $q = \textnormal{Rank}(M^{\trans}F_a)$ and $\Lambda = F_a^{\trans}M \Sigma_{v_P}^{-1} M^{\trans} F_a$.
\end{lemma}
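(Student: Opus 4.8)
The plan is to recognize $t(z)$ as the squared Euclidean norm of a standard Gaussian vector, after which the claim follows directly from the definition of the (noncentral) chi-square distribution. The starting point is $z \sim \mc{N}(M^{\trans}F_a a, \Sigma_{v_P})$; I would abbreviate its mean as $\mu \triangleq M^{\trans}F_a a$, so that under $H_0$ we have $\mu = 0$.

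First I would whiten the measurements. Since $\Sigma_{v_P}>0$, the Cholesky factor $R$ in $\Sigma_{v_P}^{-1} = R^{\trans}R$ is square and invertible, so $\Sigma_{v_P} = R^{-1}R^{-\trans}$ and hence $R\Sigma_{v_P}R^{\trans} = I$. This gives $Rz \sim \mc{N}(R\mu, I)$. Next I would set $\zeta \triangleq U^{\trans}Rz$, so that by \eqref{eq:GLRT_3} the test statistic is exactly $t(z) = \zeta^{\trans}\zeta = \|\zeta\|^2$. Because the columns of $U$ are orthonormal, $U^{\trans}U = I_q$, and because $R$ is invertible the number of columns of $U$ satisfies $q = \text{Rank}(RM^{\trans}F_a) = \text{Rank}(M^{\trans}F_a)$. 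Therefore $\zeta \sim \mc{N}(U^{\trans}R\mu, I_q)$, i.e. $\zeta$ collects $q$ independent unit-variance Gaussians, and $t(z) = \|\zeta\|^2$ follows a noncentral chi-square law with $q$ degrees of freedom and noncentrality parameter $\lambda = \|U^{\trans}R\mu\|^2 = \mu^{\trans}R^{\trans}UU^{\trans}R\mu$.

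To finish I would treat the two hypotheses separately. Under $H_0$, $a = 0$ forces $\mu = 0$, hence $\lambda = 0$ and $t(z) \sim \chi_q^2$, which is \eqref{eq:dist_H0}. Under $H_1$ it remains to simplify $\lambda$: substituting the identity $R^{\trans}UU^{\trans}R = \Sigma_{v_P}^{-1}M^{\trans}F_a \tilde{M}^{+} F_a^{\trans}M\Sigma_{v_P}^{-1}$ from Lemma \ref{lem:test_stat_simpl}, together with $\mu = M^{\trans}F_a a$, reduces $\lambda$ to $a^{\trans}\tilde{M}\tilde{M}^{+}\tilde{M}a$.

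The main obstacle is this last simplification. The key observation is that $\Lambda = F_a^{\trans}M\Sigma_{v_P}^{-1}M^{\trans}F_a$ is literally the same matrix as $\tilde{M}$ defined in \eqref{eq:GLRT_2}, so the pseudoinverse identity $\tilde{M}\tilde{M}^{+}\tilde{M} = \tilde{M}$ collapses the expression to $\lambda = a^{\trans}\Lambda a$, yielding \eqref{eq:dist_H1}. Everything else is routine once the whitening step and the rank-preservation under the invertible factor $R$ are in place; the only care required is to track that $U$ has exactly $q = \text{Rank}(M^{\trans}F_a)$ columns so that the degrees of freedom are correctly identified.
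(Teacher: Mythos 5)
Your proposal is correct and follows essentially the same route as the paper: whiten with the Cholesky factor $R$, project onto the orthonormal columns of $U$ to obtain a standard Gaussian vector of dimension $q=\mathrm{Rank}(M^{\trans}F_a)$, and read off the (noncentral) chi-square law. The only cosmetic difference is in simplifying the noncentrality parameter — you invoke the stated identity of Lemma \ref{lem:test_stat_simpl} together with $\tilde{M}\tilde{M}^{+}\tilde{M}=\tilde{M}$, while the paper uses $UU^{\trans}=RM_1(RM_1)^{+}$ from that lemma's proof; both collapse to $\lambda=a^{\trans}\Lambda a$.
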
 
\begin{proof}
By the definition of $U$ in \eqref{eq:test_stat_simpl}, and recalling $\Sigma_{v_P}^{-1} = R^{\trans} R$ with $R$ being non-singular, we have
\begin{align*}
\text{Rank}(U^{\trans}U)=\text{Rank}(U) = \text{Rank}(RM^{\trans}F_a) = \text{Rank}(M^{\trans}F_a).
\end{align*}
Let $z' = U^{\trans} R z$. Under $H_0$,  $z\sim \mc{N}(0,\Sigma_{v_P})$. Thus, 
\begin{align*}
z'\sim \mc{N}(0,U^{\trans} R\Sigma_{v_P} R^{\trans} U) \overset{(a)}{=} \mc{N}(0,I_q),
\end{align*}
where $(a)$ follows from $R\Sigma_{v_P} R^{\trans} = I$ and $U^{\trans}U = I_q$. Therefore, $t(z) = (z')^{\trans} z'\sim \chi_q^2$.

Let $M_1 = M^{\trans} F_a$. Under $H_1$, $z\sim \mc{N}(M_1 a,\Sigma_{v_P})$. Thus, 
\begin{align*}
z'&\sim \mc{N}(U^{\trans} R M_1 a, I_q)\\
\Rightarrow t(z) = (z')^{\trans} z' &\sim \chi_q^2(a^{\trans} M_1^{\trans} R^{T} U U^{\trans} R M_1 a).
\end{align*}
Using $UU^{\trans} = R M_1 (R M_1)^{+}$ from the proof of Lemma \ref{lem:test_stat_simpl}, we have
\begin{align*}
&a^{\trans} M_1^{\trans} R^{T} U U^{\trans} R M_1 a = a^{\trans} (R M_1)^{\trans} (R M_1) (R M_1)^{+} (R M_1) a \\
&= a^{\trans} (R M_1)^{\trans} (R M_1) a = a^{\trans} M_1^{\trans} \Sigma_{v_P}^{-1} M_1 a = \lambda,
\end{align*}
and the proof is complete.
\end{proof}

\begin{remark} \textbf{(Interpretation of detection parameters
    $(q,\lambda)$)} The parameter $q$ denotes the number of
  independent observations of the attack vector $a$ in the processed
  measurements \eqref{eq:process_meas}. The parameter $\lambda$ can be
  interpreted as the signal to noise ratio (SNR) of the processed
  measurements in \eqref{eq:process_meas}, where the signal of
  interest is the attack. \oprocend
\end{remark}

Next, we characterize the performance of the test
\eqref{eq:GLRT_2}. Let the probability of false alarm and probability
of detection for the test be respectively denoted by
\begin{align*}
P_F &= \text{Prob}(t(z)> \tau | H_0)  \overset{(a)}{=} \mc{Q}_q(\tau) \quad \text{and,}\\
P_D &= \text{Prob}(t(z)> \tau | H_1) \overset{(b)}{=} \mc{Q}_q(\tau;\lambda),
\end{align*}
where $(a)$ and $(b)$ follow from \eqref{eq:dist_H0} and
\eqref{eq:dist_H1}, respectively. Recall that $\mc{Q}_q(x)$ and
$\mc{Q}_q(x;\lambda)$ denote the right tail probabilities of
chi-square and noncentral chi-square distributions,
respectively. Inspired by the Neyman-Pearson test framework, we select
the size ($P_F$) of the test and determine the threshold $\tau$ which
provides the desired size. Then, we use the threshold to perform the
test and compute the detection probability. Thus, we have
\begin{align}
\tau(q,P_F) &= \mc{Q}_q^{-1}(P_F), \\ \label{eq:Pd}
P_D(q,\lambda,P_F) &= \mc{Q}_q(\tau(q,P_F);\lambda).
\end{align}
The arguments in $\tau(q,P_F)$ and $P_D(q,\lambda,P_F)$ explicitly
denote the dependence of these quantities on the detection parameters
$(q,\lambda)$ and the probability of false alarm ($P_F$). Note that
the detection performance of Subsystem $1$ is characterized by the
pair $(P_F, P_D)$, where a lower value of $P_F$ and a higher value of
$P_D$ is desirable. Later, in order to compare the performance of two
different tests, we select a common value of $P_F$ for both of them,
and then compare the detection probability $P_D$.

The next result states the dependence of the detection probability on
the detection parameters $(q,\lambda)$.

\begin{lemma} \label{lem:Pd_dependence} \textbf{(Dependence of
    detection performance on detection parameters $(q,\lambda)$)} For
  any given false alarm probability $P_F$, the detection probability
  $P_D(q,\lambda,P_F)$ is decreasing in $q$ and increasing in
  $\lambda$.
\end{lemma}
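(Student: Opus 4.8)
The plan is to prove the two monotonicity statements separately, since they rest on different mechanisms. Throughout, recall that $\tau(q,P_F)=\mc{Q}_q^{-1}(P_F)$ is the $(1-P_F)$-quantile of the central $\chi_q^2$ law and that $P_D(q,\lambda,P_F)=\mc{Q}_q(\tau(q,P_F);\lambda)$.

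\emph{Monotonicity in $\lambda$.} Here $q$ and $P_F$ are fixed, so $\tau=\tau(q,P_F)$ does not depend on $\lambda$ and the claim reduces to showing that the noncentral tail $\lambda\mapsto\mc{Q}_q(\tau;\lambda)$ is increasing. I would use the Poisson-mixture representation of the noncentral chi-square,
\[
\mc{Q}_q(\tau;\lambda)=\sum_{j\ge 0}\frac{e^{-\lambda/2}(\lambda/2)^j}{j!}\,\mc{Q}_{q+2j}(\tau),
\]
and observe two facts: (i) $\mc{Q}_{q+2j}(\tau)$ is nondecreasing in $j$, because $\chi_{q+2j}^2\stackrel{d}{=}\chi_q^2+\chi_{2j}^2$ is stochastically increasing in $j$; and (ii) the weights form a $\text{Poisson}(\lambda/2)$ pmf, which is stochastically increasing in $\lambda$. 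An increasing mixture of a nondecreasing sequence is increasing, which gives the claim. Equivalently, one may differentiate to obtain $\partial_\lambda\mc{Q}_q(\tau;\lambda)=\tfrac12[\mc{Q}_{q+2}(\tau;\lambda)-\mc{Q}_q(\tau;\lambda)]\ge 0$.

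\emph{Monotonicity in $q$.} This is the delicate case, since increasing $q$ simultaneously raises the threshold $\tau(q,P_F)$ (the central quantile grows) and makes the statistic stochastically larger, so a naive stochastic-dominance argument fails. It suffices to show $P_D(q+1,\lambda,P_F)\le P_D(q,\lambda,P_F)$. The key observation I would use is the decomposition $\chi_{q+1}^2(\lambda)\stackrel{d}{=}\chi_q^2(\lambda)+\chi_1^2$ with independent summands, obtained by placing all of the noncentrality in the first $q$ degrees of freedom. Writing $X\sim\chi_q^2(\lambda)$ and $W\sim\chi_1^2$ independent, the $(q+1)$-detector rejects when $X+W>\tau(q+1,P_F)$; viewed as a function of $X$ alone this is the randomized critical function $\phi(X)=\text{Prob}(W>\tau(q+1,P_F)-X)$, which tests $H_0:X\sim\chi_q^2$ against $H_1:X\sim\chi_q^2(\lambda)$ at size exactly $P_F$ (since $X+W\sim\chi_{q+1}^2$ under $H_0$) and has power $\E_{H_1}\phi(X)=P_D(q+1,\lambda,P_F)$. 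The $q$-detector is the deterministic threshold test $\psi(X)=\mathbf{1}(X>\tau(q,P_F))$ for the \emph{same} pair of hypotheses, also at size $P_F$, with power $P_D(q,\lambda,P_F)$. Because the likelihood ratio $f_q(x;\lambda)/f_q(x;0)$ is a power series in $x$ with nonnegative coefficients, hence nondecreasing, the family has monotone likelihood ratio and $\psi$ is the Neyman--Pearson most powerful size-$P_F$ test. Therefore $P_D(q,\lambda,P_F)=\E_{H_1}\psi\ge\E_{H_1}\phi=P_D(q+1,\lambda,P_F)$, as desired.

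I expect the main obstacle to be the $q$-monotonicity: the coupled change of threshold and statistic rules out an elementary argument, and the crux is to recognize the higher-dimensional detector as a \emph{suboptimal} (randomized) test of the lower-dimensional hypotheses, so that the Neyman--Pearson lemma applies. The remaining checks—that $\phi$ has size exactly $P_F$ and that the chi-square family has monotone likelihood ratio in $x$—are routine from the mixture and density formulas.
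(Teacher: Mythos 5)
Your proof is correct, but for the harder half it takes a genuinely different route from the paper. For monotonicity in $\lambda$ the two arguments essentially coincide: the paper simply cites the standard fact that the noncentral chi-square CDF is decreasing in the noncentrality parameter, while you derive it from the Poisson-mixture representation (your derivative identity $\partial_\lambda\mc{Q}_q(\tau;\lambda)=\tfrac12[\mc{Q}_{q+2}(\tau;\lambda)-\mc{Q}_q(\tau;\lambda)]\ge 0$ checks out). For monotonicity in $q$, the paper also starts from the mixture $P_D(q,\lambda)=e^{-\lambda/2}\sum_{j}\frac{(\lambda/2)^j}{j!}\mc{Q}_{q+2j}(\mc{Q}_q^{-1}(P_F))$, but then leans entirely on an external analytic result (Corollary 3.1 of Furman and Zitikis) asserting that each composed tail $\mc{Q}_{q+2j}(\mc{Q}_q^{-1}(P_F))$ is decreasing in $q$; the conclusion follows termwise. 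You instead give a self-contained statistical argument: decompose $\chi_{q+1}^2(\lambda)\stackrel{d}{=}\chi_q^2(\lambda)+\chi_1^2$, recognize the $(q+1)$-dimensional detector as a randomized size-$P_F$ test of the $q$-dimensional hypotheses, and invoke the Neyman--Pearson lemma together with the monotone likelihood ratio of the noncentral chi-square family (the likelihood ratio is a power series in $x$ with nonnegative coefficients, strictly increasing for $\lambda>0$; the case $\lambda=0$ is trivial) to conclude that the deterministic threshold test dominates. Both arguments are valid. The paper's is shorter but opaque, resting on a nontrivial gamma-function monotonicity lemma; yours avoids any external citation and, as a bonus, makes precise the paper's own informal explanation of the phenomenon — the degradation with $q$ is exactly the power loss of a suboptimal (randomized) test relative to the Neyman--Pearson optimum, which dovetails with the discussion following the lemma about the GLRT not being UMP.
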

\begin{proof}
  Since $P_F$ is fixed, we omit it in the notation. It is a standard
  result that for a fixed $q$ (and $\tau(q)$), the CDF
  $(=1-\mc{Q}_q(\tau(q);\lambda) = 1-P_D(q,\lambda))$ of a noncentral
  chi-square random variable is decreasing in $\lambda$
  \cite{NLJ-SK-NB:95}. Thus, $P_D(q,\lambda)$ is increasing in
  $\lambda$.

  Next, we have \cite{NLJ-SK-NB:95}
  \begin{align*}
    P_D(q,\lambda)  = e^{-\lambda /2} \sum_{j=0}^{\infty} \frac{(\lambda/2)^{j}}{j!} \mc{Q}_{q + 2j}(\tau(q)).
  \end{align*} 
  From \cite[Corollary 3.1]{EF-RZ:08}, it follows that
  $\mc{Q}_{q + 2j}(\tau(q)) = \mc{Q}_{q + 2j}(\mc{Q}_q^{-1}(P_F))$ is
  decreasing in $q$ for all $j>0$. Thus, $P_D(q,\lambda)$ is
  decreasing in $q$.
\end{proof}

Figure \ref{fig:Pd_vs_det_param} illustrates the dependence of the
detection probability on the parameters $(q,\lambda)$. Lemma
\ref{lem:Pd_dependence} implies that for a fixed $q$, a higher SNR
($\lambda$) leads to a better detection performance, which is
intuitive. However, for a fixed $\lambda$, an increase in the number
of independent observations ($q$) results in degradation of the
detection performance. This counter-intuitive behavior is due to the
fact that the GLRT in \eqref{eq:GLRT} is not an uniformly most
powerful (UMP) test for all values of the attack $a$. In fact, a UMP
test does not exist in this case \cite{ELL-JPR:05}. Thus, the test can
perform better for some particular attack values while it may not
perform as good for other attack values. This suboptimality is an
inherent property of the GLRT in \eqref{eq:GLRT}. It arises due to the
composite nature of the test and the fact that the value of the attack
vector $a$ is not known to the attack monitor.

\begin{figure}[t]
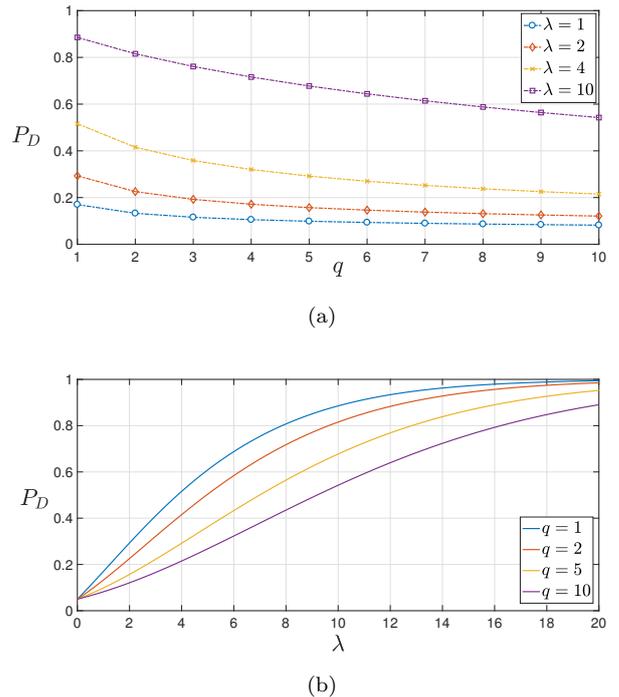

  \centering
  \subfigure[]{
  \includegraphics[width=\columnwidth]{./img/New/Pd_vs_q} \label{fig:Pd_vs_q}} 
  \subfigure[]{
  \includegraphics[width=\columnwidth]{./img/New/Pd_vs_lambda} \label{fig:Pd_vs_lambda}}
\caption{Dependence of the detection probability $P_D$ on the
  detection parameters $(q,\lambda)$ for a fixed $P_F = 0.05$. $P_D$
  decreases monotonically with $q$ (subfigure (a)), whereas it
  increases monotonically with $\lambda$ (subfigure (b)).}
  \label{fig:Pd_vs_det_param}
\end{figure}

\begin{remark} \textbf{(Composite vs. simple
    test)} \label{rem:comp_vs_sim_test} If the value of the attack
  vector (say $a_1$) is known, we can cast a \emph{simple} (simple
  vs. simple) binary hypothesis testing problem as $H_0: a=0$
  vs. $H_1:a=a_1$ and use the standard Likelihood Ratio Test
  criterion for detection. In this case the detection probability
  depends only on $P_F$ and SNR $(\lambda)$, and for any given $P_F$,
  the detection performance improves as the SNR increases.  \oprocend
\end{remark}

\section{Privacy quantification}
In this section, we quantify the privacy of the mechanism $\mc{M}_i$
in \eqref{eq:limited_meas} in terms of the estimation error covariance
of the state $x_i$. \blue{For simplicity, we assume $i \neq 1$, and
  this estimation is performed by Subsystem $1$, which is directly
  coupled with Subsystem $i$ and receives limited measurements from
  it.} Then, we use this quantification to compare and rank different
privacy mechanisms.

We use a batch estimation scheme in which the estimate is computed
based on the collective measurements obtained for $k=0, 1,\cdots, T-1$,
with $T>0$ . Let
$\tilde{y}_i = [\tilde{y}_i^{\trans}(0),\cdots,
\tilde{y}_i^{\trans}(T-1)]^{\trans}$, and let $x_i$, $v_i$,
$\tilde{r}_i$ be similar time-aggregated vectors of
$x_i(k), v_i(k), \tilde{r}_i(k)$, respectively. Then, using
\eqref{eq:limited_meas}, we have
\begin{align} \label{eq:limited_meas_tag}
\tilde{y}_i = (\underbrace{I_T \otimes S_iC_i}_{\textstyle \triangleq H_i}) x_i + \underbrace{(I_T \otimes S_i)v_i + \tilde{r}_i}_{\textstyle \triangleq r_i}, 
\end{align}
where $r_i \sim \mc{N}(0, \Sigma_{r_i})$ with
$\Sigma_{r_i} = I_T \otimes (S_i \Sigma_{v_i} S_i^{\trans} +
\Sigma_{\tilde{r}_i})$. Note that Subsystem $1$ that receives
measurements \eqref{eq:limited_meas_tag} from Subsystem $i$ knows
$\{H_i,\Sigma_{r_i}\}$ (c.f. \emph{Assumption} 2). However, it is
oblivious to the statistics of the confidential stochastic signal
$x_i$. Therefore, it computes an estimate of $x_i$ assuming that $x_i$
is a deterministic but unknown quantity. \blue{Further, this estimate
  is computed by Subsystem $1$ using the measurements received only
  from Subsystem $i$, and it does not use its local measurements or
  the measurements received from other subsystems for this
  purpose. The reason is twofold. First, although the local
  measurements $y_L$ of Subsystem $1$ depend on $x_i$ due to the
  interconnected nature of the system (see \eqref{eq:block_meas1}),
  they cannot be used due to the presence of the unknown attack on
  Subsystem $1$ given by $F_{\tilde{a}}\tilde{a} = F_a a$, where
  $F_a=F(I)$ is known and $a$ is unknown. If we try to eliminate these
  unknown attacks by pre-multiplying \eqref{eq:block_meas1} with a
  matrix $M'$, where $(M')^{\trans}$ is the basis of Null($F_a$), this
  operation also eliminates $x$ (and $x_i$), since
  $\text{Null}(F_a)\subseteq \text{Null}(F_x)$. Second, the
  measurements received from other subsystems cannot be used since
  Subsystem $1$ does not have the knowledge of the dynamics or attacks
  on these other subsystems.}\footnote{\blue{Due to these reasons, the
    estimation capability of any Subsystem $j\in\mc{S}_{-i}$ trying to
    infer $x_i$ will be the same.}}

 \blue{According to \eqref{eq:limited_meas_tag}, $\tilde{y}_i \sim \mc{N}(H_ix_i, \Sigma_{r_i})$, and} the Maximum Likelihood (ML) estimate of $x_i$ based on $\tilde{y}_i$
\blue{is computed by maximizing the log-likelihood function of $\tilde{y}_i$, and is given by:}
\begin{align}
\blue{\hat{x}_i} &\blue{= \arg \underset{z}{\max} \quad -\frac{1}{2} (\tilde{y}_i-H_i z)^{\trans}  \Sigma_{r_i}^{-1} (\tilde{y}_i-H_i z)} \nonumber \\ \label{eq:x_i_estimate}
\begin{split}
&\overset{\blue{(a)}}{=} \tilde{H}_i^{+} H_i^{\trans} \Sigma_{r_i}^{-1} \tilde{y}_i + (I-\tilde{H}_i^{+}\tilde{H}_i) d_i, \quad\text{where}\\ 
\tilde{H}_i &\triangleq H_i^{\trans} \Sigma_{r_i}^{-1} H_i \geq 0,
\end{split}
\end{align} 
$d_i$ is any real vector of appropriate dimension, \blue{and equality $(a)$ follows from Lemma \ref{lem:WLS} in the Appendix}. If
$\tilde{H}_i$ (or equivalently $H_i$) is not full column rank, then
the estimate can lie anywhere in Null($\tilde{H}_i$) = Null($H_i$)
(shifted by
$\tilde{H}_i^{+} H_i^{\trans} \Sigma_{r_i}^{-1} \tilde{y}_i$). Thus,
the component of $x_i$ that lies in Null($H_i$) cannot be estimated
and only the component of $x_i$ that lies in Im($\tilde{H}_i$) =
Im($H_i^{\trans}$) can be estimated. Let
$\mc{P}_i \triangleq \tilde{H}_i^{+}\tilde{H}_i$ denote the projection
operator on Im($\tilde{H}_i$). The estimation error in this subspace
is given by:
\begin{align}
e_i &= \mc{P}_i x_i - \mc{P}_i \hat{x}_i  = \tilde{H}_i^{+}\tilde{H}_i x_i - \tilde{H}_i^{+} H_i^{\trans} \Sigma_{r_i}^{-1} \tilde{y}_i \nonumber \\
& = - \tilde{H}_i^{+} H_i^{\trans} \Sigma_{r_i}^{-1} r_i,
\end{align}
and the estimation error covariance is given by:
\begin{align}
\Sigma_{e_i} &= \mathbb{E}[\tilde{H}_i^{+} H_i^{\trans} \Sigma_{r_i}^{-1} r_i r_i^{\trans}  \Sigma_{r_i}^{-1} H_i \tilde{H}_i^{+}] \nonumber \\ \label{eq:err_cov}
& \blue{=\tilde{H}_i^{+}\underbrace{ H_i^{\trans} \Sigma_{r_i}^{-1} H_i}_{\tilde{H}_i} \tilde{H}_i^{+}}  =  \tilde{H}_i^{+}.
\end{align}
Note that since the model in \eqref{eq:limited_meas_tag} is linear
with Gaussian noise, $\mc{P}_i \hat{x}_i$ is the minimum-variance
unbiased (MVU) estimate of $x_i$ projected on
Im($H_i^{\trans}$). Thus, the covariance $\Sigma_{e_i}$ captures the
fundamental limit on how accurately $\mc{P}_i x_i$ can be estimated
and, therefore, it is a suitable metric to quantify privacy.

The privacy level of mechanism $\mc{M}_i$ in \eqref{eq:limited_meas}
is characterized by two quantities: (i) rank($S_i$), and (ii)
$\Sigma_{e_i}$. Intuitively, if rank($S_i$) is small, then Subsystem
$i$ shares fewer measurements and, as a result, the component of $x_i$
that cannot be estimated $((I-\tilde{H}_i^{+}\tilde{H}_i) x_i)$
becomes large. Further, if $\Sigma_{e_i}$ is large (in a positive
semi-definite sense), this implies that the estimation accuracy of the
component of $x_i$ that can be estimated
$(\tilde{H}_i^{+}\tilde{H}_ix_i)$ is worse. Thus, a lower value of
rank($S_i$) and a larger value of $\Sigma_{e_i}$ implies a larger
level of privacy. Based on this discussion, we next define an ordering
between two privacy mechanisms.

Consider two privacy mechanisms $\mc{M}_i^{(1)}$ and $\mc{M}_i^{(2)}$,
and let $\tilde{y}_i^{(k)}, \hat{x}_i^{(k)}$, $k = 1,2$ denote the
limited measurements and estimates corresponding to the two
mechanisms, respectively. Further, let
$S_i^{(k)},H_i^{(k)},\tilde{H}_i^{(k)},\mc{P}_i^{(k)},\Sigma_{e_i}^{(k)},$
$k = 1,2$ denote the quantities defined above corresponding to
$\mc{M}_i^{(1)}$ and $\mc{M}_i^{(2)}$.
\begin{definition}\label{def:priv_order} \textbf{(Privacy ordering)}
  Mechanism $\mc{M}_i^{(2)}$ is more private than $\mc{M}_i^{(1)}$,
  denoted by $\mc{M}_i^{(2)}\geq \mc{M}_i^{(1)}$, if
\begin{align} \label{eq:priv_order}
\begin{split}
&(i) \:\textnormal{Im}\left((S_i^{(2)})^{\trans}\right) \subseteq \textnormal{Im}\left((S_i^{(1)})^{\trans}\right) \quad \text{and,} \hspace{50pt}\\ 
&(ii)\: \Sigma_{e_i}^{(2)} \geq \mc{P}_i^{(2)}\Sigma_{e_i}^{(1)} \mc{P}_i^{(2)}. \hspace*{105pt} \oprocend
\end{split}
\end{align}
\end{definition}

The first condition implies that $\tilde{y}_i^{(2)}$ is a limited
version of $\tilde{y}_i^{(1)}$ and is required for the ordering to be
well defined. Under this condition, it is easy to see that
$\text{Im}(H_i^{(2)}) = \text{Im}(\mc{P}_i^{(2)}) \subseteq
\text{Im}(H_i^{(1)}) = \text{Im}(\mc{P}_i^{(1)})$. Thus, the estimated
component $\mc{P}_i^{(2)} \hat{x}_i^{(2)}$ lies in a subspace that is
contained in the subspace of the estimated component
$\mc{P}_i^{(1)} \hat{x}_i^{(1)}$. For a fair comparison between the
two mechanisms, we consider the projection of
$\mc{P}_i^{(1)} \hat{x}_i^{(1)}$ on $\text{Im}(\mc{P}_i^{(2)})$, given
by\\
$\mc{P}_i^{(2)} \mc{P}_i^{(1)} \hat{x}_i^{(1)} = \mc{P}_i^{(2)}
\hat{x}_i^{(1)}$. Then, we compare its estimation error (given by
$\mc{P}_i^{(2)}\Sigma_{e_i}^{(1)} \mc{P}_i^{(2)}$) with the estimation
error of $\mc{P}_i^{(2)} \hat{x}_i^{(2)}$ (given by
$\Sigma_{e_i}^{(2)}$) to obtain the second condition in
\eqref{eq:priv_order}. Next, we present an example to illustrate
Definition \ref{def:priv_order}.

\begin{example}
  Let $x_i \in \mathbb{R}^{2}$, $C_i = I_2$, $T = 1$, and consider two
  privacy mechanisms given by:
\begin{align*}
&\mc{M}_i^{(1)}: \qquad \qquad \tilde{y}_i^{(1)} = (x_i + v_i) + \tilde{r}_i^{(1)},\\
&\mc{M}_i^{(2)}: \qquad \qquad \tilde{y}_i^{(2)} = \begin{bmatrix} 1 & 0 \end{bmatrix}(x_i + v_i) + \tilde{r}_i^{(2)},
\end{align*}
with $\Sigma_{v_i} = \Sigma_{\tilde{r}_i}^{(1)} = I_2$ and
$\Sigma_{\tilde{r}_i}^{(2)} = \alpha \geq 0$. Mechanism
$\mc{M}_i^{(1)}$ shares both components of the measurement vector
$y_i$ ($S_i^{(1)} = I_2$) whereas $\mc{M}_i^{(2)}$ shares only the
first component ($S_i^{(2)} = [1 \:\:0]$), and both add some
artificial noise. The state estimates under the two mechanisms (using
\eqref{eq:x_i_estimate}) are given by
\begin{align*}
 \hat{x}_i^{(1)} = \tilde{y}_i^{(1)} \quad \text{and} \quad \hat{x}_i^{(2)} = \left[\begin{matrix} 1\\ 0 \end{matrix}\right] \tilde{y}_i^{(2)} + \left[\begin{matrix} 0 & 0\\ 0 & 1\end{matrix}\right] d_i. 
 \end{align*}
 Thus, under $\mc{M}_i^{(1)}$ both components of $x_i$ can be
 estimated while under $\mc{M}_i^{(2)}$, only the first component can
 be estimated. Further, we have
 $\Sigma_{e_i}^{(1)} = 2I_2,\Sigma_{e_i}^{(2)} =
 \left[\begin{smallmatrix} 1+\alpha & 0\\ 0 &
     0\end{smallmatrix}\right]$ and
 $\mc{P}_i^{(2)} = \left[\begin{smallmatrix} 1 & 0\\ 0 &
     0\end{smallmatrix}\right]$. Thus, the estimation error covariance
 of the first component of $x_i$ under $\mc{M}_i^{(1)}$ and
 $\mc{M}_i^{(2)}$ are $2$ and $1+\alpha$, respectively, and
 $\mc{M}_i^{(2)}$ is more private than $\mc{M}_i^{(1)}$ if
 $\alpha\geq 1$.

 On the other hand, if $\alpha < 1$, then an ordering between the
 mechanisms cannot be established. In this case, under
 $\mc{M}_i^{(1)}$, both the state components can be estimated but the
 estimation error in first component is large. In contrast, under
 $\mc{M}_i^{(2)}$, only the first component can be estimated but its
 estimation error is small.  \oprocend
\end{example}

Next, we state a sufficient condition on the noise added by two
privacy mechanisms that guarantee the ordering of the mechanisms. This
condition implies that, if one privacy mechanism shares a subspace of
the measurements of the other mechanism and injects a sufficiently
large amount of noise, then it is more private.

\begin{lemma} (\textbf{Sufficient condition for privacy
    ordering}) \label{eq:suff_cond_priv_order} Consider two privacy
  mechanisms $\mc{M}_i^{(1)}$ and $\mc{M}_i^{(2)}$ in
  \eqref{eq:limited_meas} with parameters
  $(S_i^{(k)}, \Sigma_{\tilde{r}_{i}}^{(k)})$, $k=1,2$ that satisfy
  condition $(i)$ of \eqref{eq:priv_order}. Let $P$ be a full row rank matrix that satisfies $S_i^{(2)} = P S_i^{(1)}$. If
\begin{align} \label{eq:noise_relation}
\Sigma_{\tilde{r}_{i}}^{(2)} \geq P \Sigma_{\tilde{r}_{i}}^{(1)} P^{\trans},
\end{align}
then $\mc{M}_i^{(2)}$ is more private than $\mc{M}_i^{(1)}$.
\end{lemma}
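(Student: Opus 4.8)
The plan is to verify the two requirements of Definition~\ref{def:priv_order}. Condition~$(i)$ holds by hypothesis, so the whole task reduces to establishing condition~$(ii)$, i.e. $\Sigma_{e_i}^{(2)} \geq \mc{P}_i^{(2)}\Sigma_{e_i}^{(1)}\mc{P}_i^{(2)}$. First I would strip the Kronecker structure: writing $G^{(k)} \triangleq S_i^{(k)} C_i$ and $N^{(k)} \triangleq S_i^{(k)}\Sigma_{v_i}(S_i^{(k)})^{\trans} + \Sigma_{\tilde{r}_i}^{(k)}$, every relevant matrix is of the form $I_T\otimes(\cdot)$, and using $(I_T\otimes A)^+ = I_T\otimes A^+$ together with the fact that the positive-semidefinite order is preserved under $I_T\otimes(\cdot)$, it suffices to prove the single-block inequality $(\tilde{G}^{(2)})^+ \geq \Pi^{(2)}(\tilde{G}^{(1)})^+\Pi^{(2)}$, where $\tilde{G}^{(k)} \triangleq (G^{(k)})^{\trans}(N^{(k)})^{-1}G^{(k)}$ and $\Pi^{(k)} \triangleq (\tilde{G}^{(k)})^+\tilde{G}^{(k)}$ is the orthogonal projector onto $\text{Im}(\tilde{G}^{(k)}) = \text{Im}((G^{(k)})^{\trans})$.

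Two consequences of the hypotheses drive the argument. From $S_i^{(2)} = P S_i^{(1)}$ we obtain $G^{(2)} = P G^{(1)}$; and expanding gives $N^{(2)} - P N^{(1)} P^{\trans} = \Sigma_{\tilde{r}_i}^{(2)} - P\Sigma_{\tilde{r}_i}^{(1)}P^{\trans} \geq 0$ by \eqref{eq:noise_relation}, so $N^{(2)} \geq P N^{(1)} P^{\trans} > 0$. I would then introduce the intermediate matrix $\mathcal{J} \triangleq (G^{(2)})^{\trans}(P N^{(1)} P^{\trans})^{-1} G^{(2)}$, which is exactly the $\tilde{H}$-type matrix of \eqref{eq:x_i_estimate} associated with the \emph{processed} measurement $z = P\tilde{y}_i^{(1)} = G^{(2)} x_i + \text{noise}$ of covariance $P N^{(1)} P^{\trans}$, and split the target into two inequalities chained through $\mathcal{J}^+$.

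The first inequality isolates the effect of the extra noise. Since $N^{(2)} \geq P N^{(1)} P^{\trans}>0$, inversion reverses the order, and congruence by $G^{(2)}$ yields $\tilde{G}^{(2)} \leq \mathcal{J}$. Crucially $\tilde{G}^{(2)}$ and $\mathcal{J}$ share the range $\text{Im}((G^{(2)})^{\trans})$ (both inner factors are positive definite), so I may apply the fact that for positive-semidefinite matrices with a common range $A\leq B$ implies $A^+ \geq B^+$, giving $(\tilde{G}^{(2)})^+ \geq \mathcal{J}^+$. The second inequality isolates the effect of sharing fewer measurements through $P$ and is a data-processing statement: by the generalized Gauss--Markov theorem, $\Pi^{(2)}(\tilde{G}^{(1)})^+\Pi^{(2)}$ is the minimum-variance error covariance for estimating $\Pi^{(2)} x_i$ from the full measurement $\tilde{y}_i^{(1)}$ (the functional $\Pi^{(2)} x_i$ is estimable since $\text{Im}(\Pi^{(2)}) = \text{Im}((G^{(2)})^{\trans}) \subseteq \text{Im}((G^{(1)})^{\trans})$ by $G^{(2)}=PG^{(1)}$), whereas $\mathcal{J}^+$ is the corresponding minimum-variance error from the processed measurement $z = P\tilde{y}_i^{(1)}$; as $z$ is a linear function of $\tilde{y}_i^{(1)}$, estimating from $z$ cannot do better, hence $\mathcal{J}^+ \geq \Pi^{(2)}(\tilde{G}^{(1)})^+\Pi^{(2)}$. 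Chaining, $(\tilde{G}^{(2)})^+ \geq \mathcal{J}^+ \geq \Pi^{(2)}(\tilde{G}^{(1)})^+\Pi^{(2)}$, and tensoring with $I_T$ recovers condition~$(ii)$.

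The delicate point, and the main obstacle, is that the Moore--Penrose pseudo-inverse is \emph{not} order-reversing in general: a naive chain would compare $\tilde{G}^{(2)}$ with $\tilde{G}^{(1)}$ directly, but these have different ranges and the inequality would fail. This is precisely why the intermediate matrix $\mathcal{J}$ and the projector $\Pi^{(2)}$ must be inserted: the first inequality is legitimate only because $\tilde{G}^{(2)}$ and $\mathcal{J}$ share a range, and the second must be phrased as a projected (data-processing) inequality rather than a plain pseudo-inverse reversal. If a purely algebraic route to the data-processing step is preferred, one can write $\mathcal{J} = A^{\trans}\Pi_Q A$ and $\tilde{G}^{(1)} = A^{\trans}A$ with $A \triangleq (N^{(1)})^{-1/2}G^{(1)}$ and $\Pi_Q$ the orthogonal projector onto $\text{Im}((N^{(1)})^{1/2}P^{\trans})$, and verify $(A^{\trans}\Pi_Q A)^+ \geq \Pi^{(2)}(A^{\trans}A)^+\Pi^{(2)}$ by decomposing $A$ along $\Pi_Q$; I expect the cleanest writeup to invoke the estimation-theoretic interpretation instead.
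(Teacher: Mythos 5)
Your proof is correct, but it takes a genuinely different route from the paper's. The paper establishes the single inequality $\tilde{H}_i^{(2)} \leq \tilde{H}_i^{(1)}$ in one shot: it writes $Y^{(2)}$ using $S_i^{(2)} = PS_i^{(1)}$, absorbs $E = \Sigma_{\tilde{r}_i}^{(2)} - P\Sigma_{\tilde{r}_i}^{(1)}P^{\trans} \geq 0$ into the inner inverse, and applies its Appendix Lemma \ref{lem:sigma_ordering} ($\Sigma^{-1} \geq S(S^{\trans}\Sigma S + \Sigma_a)^{-1}S^{\trans}$, proved by a Schur-complement argument); it then passes directly to condition $(ii)$ via Hartwig's partial-ordering result, $A \leq B \Rightarrow A \geq AB^{+}A$ for positive semidefinite matrices, followed by conjugation with $A^{+}$, so that the noise increase and the range reduction are handled simultaneously and no order-reversal of the pseudo-inverse is ever needed. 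You instead split the degradation into two separately interpretable steps through the intermediate matrix $\mathcal{J}$: an extra-noise step, legitimate precisely because $\tilde{G}^{(2)}$ and $\mathcal{J}$ share a range so the restricted inverse is genuinely order-reversing, and a fewer-measurements step resolved by a Gauss--Markov data-processing argument. Both chains are sound, and your diagnosis that a naive pseudo-inverse reversal fails across different ranges is exactly the obstruction the paper's Hartwig step is designed to bypass. What your decomposition buys is interpretability --- it isolates the contribution of each privacy knob --- at the cost of importing the generalized Gauss--Markov theorem (or the purely algebraic verification you only sketch at the end); the paper's argument is shorter and entirely matrix-algebraic, resting on its two cited lemmas. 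If you write yours up, make explicit the BLUE error-covariance formula $K(G^{\trans}N^{-1}G)^{+}K^{\trans}$ for estimable $K$ and the observation that every linear unbiased estimator built from $P\tilde{y}_i^{(1)}$ is in particular one built from $\tilde{y}_i^{(1)}$.
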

\begin{proof}
From \eqref{eq:limited_meas_tag} and \eqref{eq:x_i_estimate}, we have
\begin{align*}
\tilde{H}_i^{(k)} = I_T\otimes \underbrace{(S_i^{(k)}C_i)^{\trans} \left[S_i^{(k)} \Sigma_{v_i}(S_i^{(k)})^{\trans} + \Sigma_{\tilde{r}_{i}}^{(k)}\right]^{-1} S_i^{(k)}C_i}_{\triangleq Y^{(k)}}.
\end{align*}
Since $(S_i^{(1)}, S_i^{(2)})$ satisfy \eqref{eq:priv_order} $(i)$, there always exist a full row rank matrix $P$ satisfying $S_i^{(2)} = P S_i^{(1)}$. Next we have,
\begin{align} \label{eq:noise_reln_pf1}
&Y^{(2)}\!\! = \!(S_i^{(1)}C_i)^{\trans} P^{\trans} \!\left[P S_i^{(1)} \Sigma_{v_i}(S_i^{(1)})^{\trans} P^{\trans} \!+\! \Sigma_{\tilde{r}_{i}}^{(2)}\right]^{-1}\!\! P S_i^{(1)}C_i  \nonumber \\
 &=  \!(S_i^{(1)}C_i)^{\trans}\! P^{\trans} \!\left[\!P (S_i^{(1)} \Sigma_{v_i}(S_i^{(1)})^{\trans}\! \!+\! \Sigma_{\tilde{r}_{i}}^{(1)}) P^{\trans}\! \!+\! E \right]^{-1}\!\! \!P S_i^{(1)}C_i \nonumber \\
 & \overset{(a)}{\leq}\!\! (S_i^{(1)}C_i)^{\trans}\! \! \left[S_i^{(1)} \Sigma_{v_i}(S_i^{(1)})^{\trans}\! \!+\! \Sigma_{\tilde{r}_{i}}^{(1)}\right]^{-1} \!\!\! S_i^{(1)}C_i = Y^{(1)}
\end{align}
where $E \triangleq \Sigma_{\tilde{r}_{i}}^{(2)} - P \Sigma_{\tilde{r}_{i}}^{(1)} P^{\trans}$ and $(a)$ follows from $E\geq 0$ (using \eqref{eq:noise_relation}) and Lemma \ref{lem:sigma_ordering} \blue{in the Appendix}. From \eqref{eq:noise_reln_pf1}, it follows that
\begin{align*}
 &\tilde{H}_i^{(2)} \leq \tilde{H}_i^{(1)} \overset{(b)}{\Rightarrow}  \tilde{H}_i^{(2)} \geq \tilde{H}_i^{(2)}  (\tilde{H}_i^{(1)})^{+} \tilde{H}_i^{(2)} \\
 &\overset{(c)}{\Rightarrow}  (\tilde{H}_i^{(2)})^{+}  \tilde{H}_i^{(2)}  (\tilde{H}_i^{(2)})^{+} \geq   (\tilde{H}_i^{(2)})^{+}  \tilde{H}_i^{(2)}  (\tilde{H}_i^{(1)})^{+} \tilde{H}_i^{(2)}  (\tilde{H}_i^{(2)})^{+} \\
 &\overset{(d)}{=} \text{Condition $(ii)$ in \eqref{eq:priv_order}},
 \end{align*}
 where $(b)$ follows from \cite[Lemma 1]{REH:78}, and $(c)$, $(d)$ follow from facts that $(\tilde{H}_i^{(k)})^{+}$ is symmetric and $(\tilde{H}_i^{(k)})^{+} \tilde{H}_i^{(k)}= \tilde{H}_i^{(k)} (\tilde{H}_i^{(k)})^{+}$. Thus, both conditions in \eqref{eq:priv_order} are satisfied and $\mc{M}_i^{(2)} \geq \mc{M}_i^{(1)}$.
\end{proof}

We conclude this section by showing that the privacy mechanism in
\eqref{eq:limited_meas} exhibits an intuitive post-processing
property. It implies that if we further limit the measurements
produced by a privacy mechanism, then this operation cannot decrease
the privacy of the measurements. This post-processing property also
holds in the differential privacy framework
\cite{JC-GED-SH-JL-SM-GJP:16}.
\begin{lemma} (\textbf{Post-processing increases privacy}) Consider
  two privacy mechanisms $\mc{M}_i^{(1)}$ and $\mc{M}_i^{(2)}$, where
  $\mc{M}_i^{(2)}$ further limits the measurements of $\mc{M}_i^{(1)}$
  as:
\begin{align*}
\mc{M}_i^{(1)}: \qquad \tilde{y}_{i}^{(1)}(k) &= S_i^{(1)} y_i(k) + \tilde{r}_i^{(1)}(k)\\
\mc{M}_i^{(2)}: \qquad \tilde{y}_{i}^{(2)}(k) &= S \tilde{y}_i^{(1)}(k) + n_i(k),
\end{align*}
where $S$ is full row rank and $n_i(k) \sim \mc{N}(0,\Sigma_{n_i})$. Then, $\mc{M}_i^{(2)}$ is more private than $\mc{M}_i^{(1)}$.
\end{lemma}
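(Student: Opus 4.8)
The plan is to observe that the composed mechanism $\mc{M}_i^{(2)}$ is itself an instance of the canonical privacy mechanism \eqref{eq:limited_meas}, and then to invoke the already-established sufficient condition of Lemma~\ref{eq:suff_cond_priv_order} with essentially no new work. First I would substitute the definition of $\tilde{y}_i^{(1)}(k)$ into the expression for $\tilde{y}_i^{(2)}(k)$ to obtain
\begin{align*}
\tilde{y}_i^{(2)}(k) &= S\big(S_i^{(1)} y_i(k) + \tilde{r}_i^{(1)}(k)\big) + n_i(k) \\
&= \underbrace{S S_i^{(1)}}_{\triangleq\, S_i^{(2)}}\, y_i(k) + \underbrace{S\tilde{r}_i^{(1)}(k) + n_i(k)}_{\triangleq\, \tilde{r}_i^{(2)}(k)}.
\end{align*}
This exhibits $\mc{M}_i^{(2)}$ in the form \eqref{eq:limited_meas} with effective selection matrix $S_i^{(2)} = S S_i^{(1)}$ and effective artificial noise $\tilde{r}_i^{(2)}(k) = S\tilde{r}_i^{(1)}(k) + n_i(k)$. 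Assuming $n_i$ is independent of $\tilde{r}_i^{(1)}$, as is natural for freshly injected post-processing noise, its covariance is $\Sigma_{\tilde{r}_i}^{(2)} = S \Sigma_{\tilde{r}_i}^{(1)} S^{\trans} + \Sigma_{n_i}$.

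Next I would verify the two ingredients required by Lemma~\ref{eq:suff_cond_priv_order}. Condition $(i)$ of \eqref{eq:priv_order} follows at once, since $(S_i^{(2)})^{\trans} = (S_i^{(1)})^{\trans} S^{\trans}$ gives $\textnormal{Im}\big((S_i^{(2)})^{\trans}\big) \subseteq \textnormal{Im}\big((S_i^{(1)})^{\trans}\big)$. Moreover, because $S$ is full row rank by hypothesis and satisfies $S_i^{(2)} = S S_i^{(1)}$, it plays exactly the role of the matrix $P$ in the statement of Lemma~\ref{eq:suff_cond_priv_order}; that is, I would take $P = S$.

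Finally I would check the noise inequality \eqref{eq:noise_relation} for this choice of $P$. With $P = S$ the requirement $\Sigma_{\tilde{r}_i}^{(2)} \geq S \Sigma_{\tilde{r}_i}^{(1)} S^{\trans}$ reads $S \Sigma_{\tilde{r}_i}^{(1)} S^{\trans} + \Sigma_{n_i} \geq S \Sigma_{\tilde{r}_i}^{(1)} S^{\trans}$, which holds because $\Sigma_{n_i} \geq 0$ is a covariance matrix. Hence all hypotheses of Lemma~\ref{eq:suff_cond_priv_order} are satisfied and $\mc{M}_i^{(2)} \geq \mc{M}_i^{(1)}$.

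The only conceptual step, and the one I would flag as the crux, is the reduction in the first paragraph: once the post-processed mechanism is rewritten in canonical form, the post-processing noise covariance $\Sigma_{n_i}$ enters additively and nonnegatively into the effective artificial-noise covariance. The claim then becomes an immediate corollary of the sufficient condition rather than requiring a fresh matrix-inequality argument. I do not anticipate a genuine obstacle beyond being careful that $S$ indeed serves as the required full-row-rank factor $P$ and that the independence of $n_i$ and $\tilde{r}_i^{(1)}$ is (implicitly) assumed so that the cross-covariance terms vanish.
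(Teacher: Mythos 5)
Your proposal is correct and follows exactly the paper's own argument: rewrite $\mc{M}_i^{(2)}$ in the canonical form \eqref{eq:limited_meas} with $S_i^{(2)} = S S_i^{(1)}$ and $\Sigma_{\tilde{r}_i}^{(2)} = S\Sigma_{\tilde{r}_i}^{(1)}S^{\trans} + \Sigma_{n_i} \geq S\Sigma_{\tilde{r}_i}^{(1)}S^{\trans}$, then invoke Lemma \ref{eq:suff_cond_priv_order} with $P=S$. Your additional care about the full-row-rank role of $S$ and the independence of $n_i$ from $\tilde{r}_i^{(1)}$ only makes explicit what the paper leaves implicit.
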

\begin{proof}
It is easy to observe that $S_i^{(2)} = S S_i^{(1)}$ and $\tilde{r}_i^{(2)}(k) = S \tilde{r}_i^{(1)}(k) + n_i(k)$. Thus, 
\begin{align*}
\Sigma_{\tilde{r}_{i}}^{(2)} = S \Sigma_{\tilde{r}_{i}}^{(1)} S^{\trans} + \Sigma_{n_i} \geq S \Sigma_{\tilde{r}_{i}}^{(1)} S^{\trans},
\end{align*}
and the result follows from Lemma \ref{eq:suff_cond_priv_order}. 
\end{proof}

\blue{\begin{remark} (\textbf{Comparison with Differential Privacy
      (DP)}) Additive noise based privacy mechanisms have also been
    proposed in the framework of DP. Specifically, the notion of
    $(\epsilon,\delta)$-DP uses a zero-mean Gaussian noise
    \cite{JC-GED-SH-JL-SM-GJP:16}. Although the frameworks of DP and
    this paper use additive Gaussian noises, there are conceptual
    differences between the two. The DP framework distinguishes
    between the cases when a single subsystem is present or absent in
    the system, and tries to make the output statistically similar in
    both the cases. It allows access to arbitrary side information and
    does not involve any specific estimation algorithm. In contrast,
    our privacy framework assumes no side information and privacy
    guarantees are specific to the considered estimation
    procedure. Moreover, besides adding noise, our framework also
    allows for an additional means to vary privacy by sending fewer
    measurements, which is not feasible in the DP framework. \oprocend
  \end{remark}} 

\section{Detection performance vs privacy trade-off}
In this section, we present a trade-off between the attack detection
performance and privacy of the subsystems. As before, we focus on
detection for Subsystem $1$ and consider two measurement sharing
privacy mechanisms $\mc{M}_j^{(1)}$ and $\mc{M}_j^{(2)}$ for all other
subsystems $j\in\mc{S}_{-1}$. The trade-off is between the
\emph{detection performance of Subsystem 1} and the \emph{privacy
  level of all other subsystems}. We begin by characterizing the relation
between the detection parameters corresponding to these two sets of
privacy mechanisms.

\begin{theorem} \textbf{(Relation among the detection parameters of
    privacy mechanisms)} \label{thm:limit_info}  Let $\mc{M}_j^{(2)}$ be more private than $\mc{M}_j^{(1)}$ for all $j\in\mc{S}_{-1}$. Given any attack vector $a$, let $q^{(k)}$ and
  $\lambda^{(k)} = a^{\trans} {\Lambda}^{(k)}a$ denote the detection
  parameters under the privacy mechanisms
  $\left\{ \mc{M}_j^{(k)}\right\}_{j\in\mc{S}_{-1}}$, for
  $k=1,2$. Then, we have
\begin{flalign} \label{eq:det_param_relation}
\begin{split}
&(i) \:q^{(1)} \geq q^{(2)}  \quad \text{and,} \hspace{130pt}\\
&(ii) \:\lambda^{(2)} \mu_{max} \geq \lambda^{(1)} \geq \lambda^{(2)}\mu_{min} \geq \lambda^{(2)}, 
\end{split}
\end{flalign}

\noindent where $\mu_{max}$ and $\mu_{min}$ are the largest and
smallest generalized eigenvalues of
$({\Lambda}^{(1)},{\Lambda}^{(2)})$, respectively.
\end{theorem}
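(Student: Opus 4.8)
The plan is to reduce both parts of \eqref{eq:det_param_relation} to a single structural fact: the more private family forces the detection subspace $\textnormal{Im}(M^{(2)})$ to sit inside $\textnormal{Im}(M^{(1)})$ while simultaneously inflating the effective noise metric on that subspace. First I would unpack condition $(i)$ of the privacy ordering \eqref{eq:priv_order}. Taken jointly over $j\in\mc{S}_{-1}$ (all quantities are block diagonal) it gives $\textnormal{Im}((S_{-1}^{(2)})^{\trans})\subseteq\textnormal{Im}((S_{-1}^{(1)})^{\trans})$, i.e. $S_{-1}^{(2)}=PS_{-1}^{(1)}$ for some full row rank $P$, whence $\textnormal{Im}((S_{-1}^{(2)}C_{-1})^{\trans})\subseteq\textnormal{Im}((S_{-1}^{(1)}C_{-1})^{\trans})$ and the projections $(\tilde H^{(k)})^{+}\tilde H^{(k)}=I_T\otimes (S_{-1}^{(k)}C_{-1})^{+}(S_{-1}^{(k)}C_{-1})$ are nested. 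Writing $N^{(k)}\triangleq F_x(I-(\tilde H^{(k)})^{+}\tilde H^{(k)})$, nesting of these projections gives $\textnormal{Im}(N^{(1)})\subseteq\textnormal{Im}(N^{(2)})$, and since $\textnormal{Im}(M^{(k)})=\textnormal{Null}((N^{(k)})^{\trans})=\textnormal{Im}(N^{(k)})^{\perp}$ by \eqref{eq:left_null_matrix}, taking orthogonal complements yields $\textnormal{Im}(M^{(2)})\subseteq\textnormal{Im}(M^{(1)})$.

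For $(i)$, since $\Sigma_{v_P}^{(k)}\succ 0$ one has $q^{(k)}=\textnormal{Rank}(M^{(k)\trans}F_a)=\textnormal{Rank}(P^{(k)}F_a)$, where $P^{(k)}$ is the orthogonal projector onto $\textnormal{Im}(M^{(k)})$. From $\textnormal{Im}(M^{(2)})\subseteq\textnormal{Im}(M^{(1)})$ we get $P^{(2)}=P^{(2)}P^{(1)}$, so $P^{(2)}F_a=P^{(2)}(P^{(1)}F_a)$ and $q^{(2)}=\textnormal{Rank}(P^{(2)}F_a)\le\textnormal{Rank}(P^{(1)}F_a)=q^{(1)}$.

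The core is $(ii)$, which I would obtain from $\Lambda^{(1)}\succeq\Lambda^{(2)}$. Setting $\Psi^{(k)}\triangleq\Sigma_{v_L}+F_x(\tilde H^{(k)})^{+}F_x^{\trans}\succ 0$, so that $\Sigma_{v_P}^{(k)}=M^{(k)\trans}\Psi^{(k)}M^{(k)}$ by \eqref{eq:z_noise_var}, the identity $\max_{\nu}\frac{(\nu^{\trans}b)^2}{\nu^{\trans}A\nu}=b^{\trans}A^{-1}b$ for $A\succ 0$ gives the variational form $\lambda^{(k)}=a^{\trans}\Lambda^{(k)}a=\max_{0\neq\xi\in\textnormal{Im}(M^{(k)})}\frac{(\xi^{\trans}F_a a)^2}{\xi^{\trans}\Psi^{(k)}\xi}$. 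Two monotonicities then force $\lambda^{(2)}\le\lambda^{(1)}$. The feasible set shrinks, since $\textnormal{Im}(M^{(2)})\subseteq\textnormal{Im}(M^{(1)})$. The denominator grows: for $\xi\in\textnormal{Im}(M^{(2)})$ the vector $u\triangleq F_x^{\trans}\xi$ satisfies $\mc{P}^{(2)}u=u$ with $\mc{P}^{(2)}\triangleq(\tilde H^{(2)})^{+}\tilde H^{(2)}$ (this is exactly the defining relation $(N^{(2)})^{\trans}\xi=0$), so condition $(ii)$ of \eqref{eq:priv_order}, aggregated over $j$ into $(\tilde H^{(2)})^{+}\succeq\mc{P}^{(2)}(\tilde H^{(1)})^{+}\mc{P}^{(2)}$, yields $u^{\trans}(\tilde H^{(2)})^{+}u\ge(\mc{P}^{(2)}u)^{\trans}(\tilde H^{(1)})^{+}(\mc{P}^{(2)}u)=u^{\trans}(\tilde H^{(1)})^{+}u$, i.e. $\xi^{\trans}\Psi^{(2)}\xi\ge\xi^{\trans}\Psi^{(1)}\xi$. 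As $a$ is arbitrary, this establishes $\Lambda^{(1)}\succeq\Lambda^{(2)}\succeq 0$.

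Finally, $\Lambda^{(1)}\succeq\Lambda^{(2)}$ delivers $(ii)$: every finite generalized eigenvalue $\mu$ of $(\Lambda^{(1)},\Lambda^{(2)})$ obeys $\mu=(v^{\trans}\Lambda^{(1)}v)/(v^{\trans}\Lambda^{(2)}v)\ge 1$, so $\mu_{min}\ge 1$ and hence $\lambda^{(2)}\mu_{min}\ge\lambda^{(2)}$, while the two-sided generalized Rayleigh-quotient bound $\lambda^{(2)}\mu_{min}\le\lambda^{(1)}\le\lambda^{(2)}\mu_{max}$ completes the chain. The main obstacle, and the step I would treat most carefully, is the denominator comparison in $(ii)$: matching the projector $\mc{P}^{(2)}$ from the privacy definition to the subspace constraint $\xi\in\textnormal{Im}(M^{(2)})$, together with the Kronecker/block-diagonal bookkeeping needed to lift the per-subsystem orderings $\mc{M}_j^{(2)}\ge\mc{M}_j^{(1)}$ to the aggregate inequality $(\tilde H^{(2)})^{+}\succeq\mc{P}^{(2)}(\tilde H^{(1)})^{+}\mc{P}^{(2)}$. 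A secondary technical point is the degeneracy of the pencil when $\Lambda^{(2)}$ is singular (which can happen once $q^{(1)}>q^{(2)}$); I would handle it by reading the Rayleigh bound on $\textnormal{Im}(\Lambda^{(2)})$, i.e. over the regular part of the pencil.
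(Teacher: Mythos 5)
Your proposal is correct, and its skeleton coincides with the paper's: nested images of the selection matrices give $\textnormal{Im}(M^{(2)})\subseteq \textnormal{Im}(M^{(1)})$ and hence part $(i)$; the per-subsystem privacy conditions are aggregated blockwise (the paper does the bookkeeping you flag explicitly, via a permutation $\Pi$ that turns $\tilde H^{(k)}$ into $\textnormal{diag}[\tilde H_2^{(k)},\dots,\tilde H_N^{(k)}]$); the projector identity $\mc{P}^{(2)}u=u$ for $u=F_x^{\trans}\xi$, $\xi\in\textnormal{Im}(M^{(2)})$ is exactly the paper's equation \eqref{eq:det_par_comp_pf_1}; and the final chain in $(ii)$ is the two-sided generalized Rayleigh bound (the paper's Lemma \ref{lem:qcqp}). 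Where you genuinely diverge is the central step $\Lambda^{(1)}\geq\Lambda^{(2)}$: the paper factors $M^{(2)}=M^{(1)}P$, writes the additive decomposition $\Sigma_{v_P}^{(2)}=P^{\trans}\Sigma_{v_P}^{(1)}P+E$ with $E\geq 0$, and then invokes the Schur-complement matrix inequality of Lemma \ref{lem:sigma_ordering} to push the ordering through the inverse; you instead use the variational identity $a^{\trans}\Lambda^{(k)}a=\max_{0\neq\xi\in\textnormal{Im}(M^{(k)})}(\xi^{\trans}F_a a)^2/(\xi^{\trans}\Psi^{(k)}\xi)$, so that the inequality follows from a shrinking feasible set and a pointwise larger denominator. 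The two are equivalent in content --- your denominator comparison $\xi^{\trans}\Psi^{(2)}\xi\geq\xi^{\trans}\Psi^{(1)}\xi$ on $\textnormal{Im}(M^{(2)})$ is precisely the statement $E\geq 0$ --- but your route is more elementary (Cauchy--Schwarz replaces Lemma \ref{lem:sigma_ordering}) and makes the monotonicity mechanism transparent, while the paper's route yields the explicit decomposition of $\Sigma_{v_P}^{(2)}$ as an object in its own right. Your caveat about the degenerate pencil when $\Lambda^{(2)}$ is singular is well taken; the paper's Lemma \ref{lem:qcqp} is equally informal on that point, so restricting the Rayleigh bound to the regular part of the pencil is, if anything, more careful than the original.
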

\begin{proof}
From \eqref{eq:limited_meas}, \eqref{eq:share_meas_1} and \eqref{eq:share_meas_2}, for $k=1,2$, we have
\begin{align*}
H^{(k)} &= I_T \otimes \text{diag}\left(S_{2}^{(k)}C_2,\cdots,S_{N}^{(k)}C_N\right) = S_{-1}^{(k)}H, \\
\Sigma_{v_{R}}^{(k)} &= S_{-1}^{(k)} \Sigma_{v_{R}} (S_{-1}^{(k)})^{\trans} +  \Sigma_{\tilde{r}_{-1}}^{(k)} >0\quad \text{where,}\\
S_{-1}^{(k)} &= I_T \otimes \text{diag}\left(S_{2}^{(k)},\cdots,S_{N}^{(k)}\right), \\
\Sigma_{\tilde{r}_{-1}}^{(k)}  &= I_T \otimes \text{diag}\left(\Sigma_{\tilde{r}_{2}}^{(k)}, \cdots,\Sigma_{\tilde{r}_{N}}^{(k)}\right) \geq 0.
\end{align*}
Since $\mc{M}_j^{(2)}\geq \mc{M}_j^{(1)}$ for all $j\in\mc{S}_{-1}$, the first condition in \eqref{eq:priv_order} results in
\begin{align*}
\text{Im}\!\left(\!(S_{-1}^{(1)})^{\trans}\right) \!\!\supseteq\! \text{Im}\!\left(\!(S_{-1}^{(2)})^{\trans}\right)\! \Rightarrow \! \text{Im}\!\left(\! (H^{(1)})^{\trans}\right)\! \!\supseteq\! \text{Im}\!\left(\! (H^{(2)})^{\trans}\right).
\end{align*}
From \eqref{eq:x_estimate}, we have $\tilde{H}^{(k)} = (H^{(k)})^{\trans} (\Sigma_{v_{R}}^{(k)})^{-1} H^{(k)}$. Since $\text{Null}(\tilde{H}^{(k)}) = \text{Null}(H^{(k)})$, from \eqref{eq:left_null_matrix}, it follows that\\ $\text{Im}(M^{(1)}) \supseteq \text{Im}(M^{(2)})$. Recalling from \eqref{eq:dist_H1} that $q^{(k)} = \text{Rank}((M^{(k)})^{\trans} F_a)$, it follows that $q^{(1)}\geq q^{(2)}$.

Since $\text{Im}(M^{(1)}) \supseteq \text{Im}(M^{(2)})$, we have
$M^{(2)} = M^{(1)}P$ for some full column rank matrix $P$. Let
$Z\triangleq F_x^{\trans} M^{(1)}P$. From \eqref{eq:z_noise_var}, we
have
\begin{align}
\Sigma_{v_P}^{(2)} &= (M^{(2)})^{\trans} \Sigma_{v_L} M^{(2)} + (M^{(2)})^{\trans}F_x (\tilde{H}^{(2)})^{+}F_x^{\trans} M^{(2)}, \nonumber\\ \label{eq:det_par_comp_pf_0}
 &= P^{\trans} \Sigma_{v_P}^{(1)} P + \underbrace{Z^{\trans} [(\tilde{H}^{(2)})^{+}- (\tilde{H}^{(1)})^{+}]Z}_{\normalfont \triangleq E}.
\end{align}
Next, we show that $E\geq0$. . Using $M^{(2)} = M^{(1)}P$, and using
\eqref{eq:left_null_matrix} for both $\{M^{(k)}, \tilde{H}^{(k)}\}$,
$k=1,2$, we have
\begin{align} \label{eq:det_par_comp_pf_1} Z^{\trans}
  (\tilde{H}^{(1)})^{+}\tilde{H}^{(1)} =
  Z^{\trans}(\tilde{H}^{(2)})^{+}\tilde{H}^{(2)}.
\end{align}
Thus, we get
\begin{align} \label{eq:error_temp_pf}
&E = Z^{\trans} [(\tilde{H}^{(2)})^{+}-(\tilde{H}^{(1)})^{+} \tilde{H}^{(1)}(\tilde{H}^{(1)})^{+} \tilde{H}^{(1)}(\tilde{H}^{(1)})^{+}] Z\nonumber \\
& = \! Z^{\trans} [(\tilde{H}^{(2)})^{+}\!\!-\!(\tilde{H}^{(2)})^{+} \!\tilde{H}^{(2)}(\tilde{H}^{(1)})^{+}\!(\tilde{H}^{(2)})^{+} \! \tilde{H}^{(2)}] Z
\end{align}
where the last inequality follows from \eqref{eq:det_par_comp_pf_1}
and the fact that
$\tilde{H}^{(k)}(\tilde{H}^{(k)})^{+} =(\tilde{H}^{(k)})^{+}
\tilde{H}^{(k)}$. Next, we have,
\begin{subequations}
\begin{align}
&\tilde{H}^{(k)}\! =\! I_T\! \otimes \!\text{diag}\Big[\!(S_2^{(k)}C_2)^{\trans}\! (S_2^{(k)} \Sigma_{v_2} (S_2^{(k)})^{\trans}\! \!+ \!\Sigma_{\tilde{r}_2}^{(k)})^{-1} \!S_2^{(k)}C_2, \nonumber\\
&\cdots, (S_N^{(k)}C_N)^{\trans} \left(S_N^{(k)} \Sigma_{v_N} (S_N^{(k)})^{\trans} + \Sigma_{\tilde{r}_N}^{(k)}\right)^{-1} S_N^{(k)}C_N\Big]\nonumber\\
 &\!=\! \Pi^{\trans} \text{diag}\Big[\! I_T\! \otimes \!(S_2^{(k)}C_2)^{\trans}(S_2^{(k)} \Sigma_{v_2} (S_2^{(k)})^{\trans} \!\!+\! \Sigma_{\tilde{r}_2}^{(k)})^{-1} S_2^{(k)}C_2,\nonumber\\
&\cdots,  I_T \!\otimes \!(S_N^{(k)}C_N)^{\trans}\! \left(\!S_N^{(k)} \Sigma_{v_N} (S_N^{(k)})^{\trans} \!+\! \Sigma_{\tilde{r}_N}^{(k)}\right)^{-1}\! S_N^{(k)}C_N\Big]\Pi\nonumber\\
&= \Pi^{\trans} \text{diag}\left[\tilde{H}_2^{(k)}, \cdots, \tilde{H}_N^{(k)}\right] \Pi \quad \text{and,} \label{eq:permuted_H_til}\\ \label{eq:permuted_H_til_pinv}
&(\tilde{H}^{(k)})^{+}  = \Pi^{\trans} \text{diag}\left[(\tilde{H}_2^{(k)})^{+}, \cdots, (\tilde{H}_N^{(k)})^{+}\right] \Pi,
\end{align}
\end{subequations}
where $\Pi$ is a permutation matrix with $\Pi^{-1} =
\Pi^{\trans}$. Substituting \eqref{eq:permuted_H_til} and
\eqref{eq:permuted_H_til_pinv} in \eqref{eq:error_temp_pf}, we have
\begin{align*}
E &= Z^{\trans}\Pi^{\trans} \text{diag}\Big[ (\tilde{H}_2^{(2)})^{+} - \mc{P}_2^{(2)}(\tilde{H}_2^{(1)})^{+} \mc{P}_2^{(2)}, \cdots \\
&(\tilde{H}_N^{(2)})^{+} - \mc{P}_N^{(2)}(\tilde{H}_N^{(1)})^{+} \mc{P}_N^{(2)} \Big] \Pi Z \overset{(a)}{\geq} 0,
\end{align*}
where $(a)$ follows from the second condition in \eqref{eq:priv_order}
for all $j\in\mc{S}_{-1}$. Next, from \eqref{eq:dist_H1}, we have,
\begin{align*}
\Lambda^{(2)} &= F_a^{\trans}M^{(2)} (\Sigma_{v_P}^{(2)})^{-1} (M^{(2)})^{\trans} F_a \\
 & \overset{\eqref{eq:det_par_comp_pf_0}}{=} F_a^{\trans}M^{(1)} \underbrace{P (P^{\trans} \Sigma_{v_P}^{(1)} P + E)^{-1} P^{\trans}}_{\triangleq Y} (M^{(1)})^{\trans} F_a \\
 & \overset{(b)}{\leq} F_a^{\trans}M^{(1)}  (\Sigma_{v_P}^{(1)})^{-1} (M^{(1)})^{\trans} F_a  = \Lambda^{(1)},\\
 \Rightarrow \lambda^{(1)} &= a^{\trans} \Lambda^{(1)} a  \geq   a^{\trans} \Lambda^{(2)} a =  \lambda^{(2)},
\end{align*}
where $(b)$ follows from Lemma \ref{lem:sigma_ordering} \blue{in the Appendix}, and the facts
that $E\geq 0$ and $P$ is full column rank. Finally, the second
condition in \eqref{eq:det_param_relation} follows from Lemma
\ref{lem:qcqp} \blue{in the Appendix}, and the proof is complete.
\end{proof}

Theorem \ref{thm:limit_info} shows that when the subsystems
$j\in\mc{S}_{-1}$ share measurements with Subsystem $1$ using more
private mechanisms, both the number of processed measurements and the
SNR reduce. This has implications on the detection performance of
Subsystem $1$, as explained next. To compare the performance
corresponding to the two sets of privacy mechanisms, we select the
same false alarm probability $P_F$ for both the cases and compare the
detection probability. Theorem \ref{thm:limit_info} and Lemma
\ref{lem:Pd_dependence} imply that $P_D(q^{(2)},\lambda^{(2)},P_F)$
can be greater or smaller than $P_D(q^{(1)},\lambda^{(1)},P_F)$
depending on the actual values of the detection parameters. In fact,
ignoring the dependency on $P_F$ since it is same for both cases, we
have
\begin{align*}
&P_D(q^{(2)},\lambda^{(2)}) - P_D(q^{(1)},\lambda^{(1)}) =\\ 
&\! \underbrace{P_D(q^{(2)}\!,\lambda^{(2)}\!)\! -\! P_D(q^{(2)}\!,\lambda^{(1)}\!)}_{\leq 0}
+ \underbrace{P_D(q^{(2)}\!,\lambda^{(1)}\!)\! -\! P_D(q^{(1)}\!,\lambda^{(1)}\!).}_{\geq 0}
\end{align*}
Intuitively, if the decrease in $P_D$ due to the decrease in the
SNR\footnote{Note that the SNR depends upon the attack vector $a$ (via
  \eqref{eq:dist_H1}), which we do not know a-priori. Thus, depending
  on the actual attack value, the SNR can take any positive value.}
($\lambda^{(1)} \rightarrow \lambda^{(2)}$) is larger than the
increase in $P_D$ due to the decrease in the number of measurements
($q^{(1)}\rightarrow q^{(2)}$), then the the detection performance
decreases, and vice-versa. \blue{The next result formalizes this intuition.}

\blue{\begin{theorem} \textbf{(Condition for
      trade-off)}\label{thm:tradeoff_cond} Consider the setup in
    Theorem \ref{thm:limit_info}, and let the detection probability be
    given in \eqref{eq:Pd}. Then, for a given $P_F$, a
    security-privacy trade-off exists if and only if
  \begin{align*}
  P_D(q^{(2)},\lambda^{(2)},P_F) \leq P_D(q^{(1)},\lambda^{(1)},P_F).
  \end{align*}
 \end{theorem}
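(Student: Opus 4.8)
The plan is to treat this statement as a precise formalization of the phrase ``security-privacy trade-off,'' so the argument hinges on unpacking definitions rather than on any new computation. First I would fix terminology: the setup of Theorem \ref{thm:limit_info} guarantees that $\mc{M}_j^{(2)}$ is more private than $\mc{M}_j^{(1)}$ for every $j\in\mc{S}_{-1}$, so switching the other subsystems from mechanism $1$ to mechanism $2$ is, by construction, an improvement (possibly weak) in their privacy. I would then declare that a \emph{trade-off} is exactly the regime in which this privacy gain is purchased at the cost of Subsystem~$1$'s detection capability.

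Next, I would observe that, with $P_F$ fixed and shared across the two cases, the entire detection performance of Subsystem~$1$ is summarized by the detection probability $P_D$ of \eqref{eq:Pd}. Hence ``detection degrades when moving to the more private mechanism'' translates directly into the inequality $P_D(q^{(2)},\lambda^{(2)},P_F)\le P_D(q^{(1)},\lambda^{(1)},P_F)$; conversely, if this inequality fails, then the more private mechanism yields strictly better detection, a win-win in which privacy and detection both improve and no trade-off is incurred. Assembling both directions gives the claimed biconditional.

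The point I would stress --- and the reason the result is not vacuous --- is that Theorem \ref{thm:limit_info} only delivers $q^{(1)}\ge q^{(2)}$ and $\lambda^{(1)}\ge \lambda^{(2)}$, and by Lemma \ref{lem:Pd_dependence} these two movements push $P_D$ in opposing directions. Writing
\begin{align*}
&P_D(q^{(2)},\lambda^{(2)}) - P_D(q^{(1)},\lambda^{(1)}) \\
&\quad = \underbrace{P_D(q^{(2)},\lambda^{(2)}) - P_D(q^{(2)},\lambda^{(1)})}_{\le\, 0} \\
&\qquad + \underbrace{P_D(q^{(2)},\lambda^{(1)}) - P_D(q^{(1)},\lambda^{(1)})}_{\ge\, 0}
\end{align*}
makes explicit that the sign of the net change is attack-dependent and genuinely undetermined, so the stated inequality is a substantive condition separating the conventional regime from the counter-intuitive one. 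I do not expect any real obstacle here: the only care required is to phrase the definition of ``trade-off'' cleanly enough that the equivalence becomes immediate, since there is no hidden analytic estimate to discharge.
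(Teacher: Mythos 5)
Your proposal is correct and matches the paper's treatment: the paper offers no separate proof for this theorem, presenting it as a direct formalization of the preceding discussion, which contains exactly the decomposition of $P_D(q^{(2)},\lambda^{(2)})-P_D(q^{(1)},\lambda^{(1)})$ into a nonpositive SNR term and a nonnegative degrees-of-freedom term that you reproduce. Your definitional unpacking of ``trade-off'' and the appeal to Theorem \ref{thm:limit_info} and Lemma \ref{lem:Pd_dependence} is precisely the intended argument.
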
}
 \blue{The above result presents an analytical condition for the trade-off. When this condition is violated, a counter trade-off exists.} This is an interesting and counter-intuitive trade-off between the
detection performance and privacy/ information sharing, and it implies
that, in certain cases, sharing less information can lead to a better
detection performance. This phenomenon occurs because the GLRT for the
considered hypothesis testing problem is a sub-optimal test, as
discussed before.

\blue{\subsection{Privacy using only noise}
In this subsection, we analyze the special case when the subspace of the shared
measurements is fixed, and the privacy level can be varied by changing only the
noise level.} We begin by comparing the detection performance corresponding to two
privacy mechanisms that share the same subspace of measurements.

\begin{corollary} \textbf{(Strict security-privacy trade-off
    )} \label{cor:limit_meas_via_add_noise} Consider two privacy
  mechanisms $\mc{M}_j^{(2)} \geq \mc{M}_j^{(1)}$ such that
  $\textnormal{Im}\left((S_j^{(2)})^{\trans}\right) =
  \textnormal{Im}\left((S_j^{(1)})^{\trans}\right)$ for
  $j\in\mc{S}_{-1}$. Let ($q^{(k)},\lambda^{(k)}$) denote the
  detection parameters of Subsystem 1 under the privacy mechanisms
  $\left\{ \mc{M}_j^{(k)}\right\}_{j\in\mc{S}_{-1}}$, for
  $k=1,2$. Then, for any given $P_F$, we have
\begin{align*} 
P_D(q^{(2)},\lambda^{(2)},P_F) \leq P_D(q^{(1)},\lambda^{(1)},P_F).
\end{align*}
\end{corollary}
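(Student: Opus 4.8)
The plan is to reduce this corollary to the monotonicity statement in Lemma \ref{lem:Pd_dependence}, by first showing that the two mechanisms produce the \emph{same} number of processed measurements, i.e., $q^{(1)} = q^{(2)}$, and then invoking Theorem \ref{thm:limit_info} to control the SNR. The point is that the hypothesis $\textnormal{Im}\left((S_j^{(2)})^{\trans}\right) = \textnormal{Im}\left((S_j^{(1)})^{\trans}\right)$ strengthens condition $(i)$ of \eqref{eq:priv_order} from an inclusion to an equality, which collapses the competing $q$-effect that produced the ambiguous sign in Theorem \ref{thm:limit_info}.

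First I would establish $q^{(1)} = q^{(2)}$. Tracing the opening argument in the proof of Theorem \ref{thm:limit_info}, the block-diagonal structure of $S_{-1}^{(k)}$ propagates the equality of selection-subspaces to $\textnormal{Im}\left((H^{(2)})^{\trans}\right) = \textnormal{Im}\left((H^{(1)})^{\trans}\right)$, hence $\textnormal{Null}(H^{(1)}) = \textnormal{Null}(H^{(2)})$ and therefore $\textnormal{Null}(\tilde{H}^{(1)}) = \textnormal{Null}(\tilde{H}^{(2)})$. Consequently the two projectors $I - (\tilde{H}^{(k)})^{+}\tilde{H}^{(k)}$ coincide, so from \eqref{eq:left_null_matrix} the matrices $M^{(1)}$ and $M^{(2)}$ span the same subspace, i.e., $\textnormal{Im}(M^{(1)}) = \textnormal{Im}(M^{(2)})$. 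Since $q^{(k)} = \textnormal{Rank}\left((M^{(k)})^{\trans} F_a\right)$ by \eqref{eq:dist_H1}, this yields $q^{(1)} = q^{(2)} =: q$. (In the notation of the proof of Theorem \ref{thm:limit_info}, the one-sided inclusion $\textnormal{Im}(M^{(1)}) \supseteq \textnormal{Im}(M^{(2)})$ now holds in both directions.)

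Next I would control the SNR. Because $\mc{M}_j^{(2)} \geq \mc{M}_j^{(1)}$ for all $j \in \mc{S}_{-1}$, Theorem \ref{thm:limit_info} applies directly, and its part $(ii)$ gives $\lambda^{(1)} \geq \lambda^{(2)}$ for the given attack vector $a$. Finally, with $q$ fixed and $\lambda^{(1)} \geq \lambda^{(2)}$, Lemma \ref{lem:Pd_dependence} (the detection probability is increasing in $\lambda$ for fixed $q$ and $P_F$) immediately yields $P_D(q,\lambda^{(2)},P_F) \leq P_D(q,\lambda^{(1)},P_F)$, which is precisely the claimed strict trade-off.

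The only substantive step is the first one; the main obstacle is verifying carefully that equal selection-subspaces force $\textnormal{Im}(M^{(1)}) = \textnormal{Im}(M^{(2)})$ and hence $q^{(1)} = q^{(2)}$. Once $q$ is pinned down, the result follows purely from the $\lambda$-monotonicity of $P_D$, so the trade-off here is \emph{strict} (one-directional) rather than ambiguous, because eliminating variation in $q$ removes the sign-indefinite term in the decomposition of $P_D(q^{(2)},\lambda^{(2)}) - P_D(q^{(1)},\lambda^{(1)})$ discussed after Theorem \ref{thm:limit_info}.
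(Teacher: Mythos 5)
Your proposal is correct and follows essentially the same route as the paper's proof: equality of the selection subspaces forces $\textnormal{Im}\left((H^{(1)})^{\trans}\right) = \textnormal{Im}\left((H^{(2)})^{\trans}\right)$ and hence $\textnormal{Im}(M^{(1)}) = \textnormal{Im}(M^{(2)})$ and $q^{(1)} = q^{(2)}$, after which $\lambda^{(1)} \geq \lambda^{(2)}$ from Theorem \ref{thm:limit_info} and the monotonicity in Lemma \ref{lem:Pd_dependence} finish the argument. Your version merely makes the intermediate null-space and projector identifications explicit, which the paper leaves implicit.
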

\begin{proof}
  Since the mechanisms share the same subspace of measurements, from
  the proof of Theorem \ref{thm:limit_info}, we have
\begin{align*}
\text{Im}\!\left(\!(S_{-1}^{(1)})^{\trans}\right) \!\! &=\! \text{Im}\!\left(\!(S_{-1}^{(2)})^{\trans}\right)\! \Rightarrow \! \text{Im}\!\left(\! (H^{(1)})^{\trans}\right)\! \! = \! \text{Im}\!\left(\! (H^{(2)})^{\trans}\right) \\
& \Rightarrow \! \text{Im}\!\left(\! M^{(1)}\right)\! \! = \! \text{Im}\!\left(\! M^{(2)}\right) \Rightarrow q^{(1)} = q^{(2)}.
\end{align*}
The fact that $\lambda^{(1)}\geq \lambda^{(2)}$ follows from Theorem \ref{thm:limit_info}, and the result then follows from Lemma \ref{lem:Pd_dependence}.
\end{proof}

The above result implies that there is strict trade-off between
privacy and detection performance when the subspace of the shared
measurements is fixed and the privacy level is varied by changing the
noise level. In this case, more private mechanisms result in a poorer
detection performance, and vice-versa.

\blue{ Corollary \ref{cor:limit_meas_via_add_noise} qualitatively
  captures the security-privacy trade-off. Next, we present a
  quantitative analysis that determines the best possible detection
  performance subject to a given privacy level. Note that since the
  subspace of the shared measurements is fixed, the detection
  parameter $q$ is also fixed, as well as $P_F$ and the attack
  $a$. Thus, according to Lemma \ref{lem:Pd_dependence}, the detection
  performance can be improved by increasing
  $\lambda = a^{\trans} \Lambda a$. Intuitively, $\lambda$ is large
  (irrespective of $a$) if $\Lambda$ is large, or when $\Lambda^{+}$
  is small (in a positive-semidefinite sense).\footnote{Minimization
    of $\Lambda^{+}$ allows us to formulate a semidefinite
    optimization problem, as we show later.} Further, the privacy
  level is quantified by the error covariance in
  \eqref{eq:err_cov}. Based on this, we formulate the following
  optimization problem:\footnote{This problem corresponds to Subsystem
    $1$. A similar problem can be formulated for the whole system
    whose cost is the sum of the costs of the individual subsystems.}

\begin{align} \label{eq:opt prob}
&\underset{\textstyle \Sigma_{\tilde{r}_2}\geq0,\cdots, \Sigma_{\tilde{r}_N}\geq0} {\min} && \text{Tr}(\Lambda^{+}) \\\nonumber
&\qquad\qquad \text{s.t.}   &&\text{Tr}(\Sigma_{e_i}) \geq \epsilon_i',       \quad  i=2,\cdots,N,
\end{align} 
where $\epsilon_i'>0$ is the minimum desired privacy level of Subsystem $i$. The design variables of the above optimization problem are the positive semi-definite noise covariance matrices $\Sigma_{\tilde{r}_2},\cdots, \Sigma_{\tilde{r}_N}$. Next, we show that under some mild assumptions, \eqref{eq:opt prob} is a semidefinite optimization problem.

\begin{lemma}
  Assume that $F(I)$ in \eqref{eq:F_matrix} and $C_i$ for
  $i=1,\cdots,N$ are full row rank. Let
  $D_1 = \sum_{i=1}^{T} D_{1,i}$, where the matrices
  $D_{1,i}\in\real^{n_1 \times n_1}$, $i=1,\cdots, T$ are the block
  diagonal elements of $(M^{\trans} F_a)^{+} M^{\trans} F_a$. Further,
  let
\begin{align*}
K_1 &= B_1 (S_{-1}C_{-1})^{+}, \qquad L_1 = K_1^{\trans} D_1 K_1,\\
l_1 &= \textnormal{Tr}\left[(M^{\trans} F_a)^{+}  M^{\trans }\Sigma_{v_L} M ((M^{\trans} F_a)^{+})^{\trans}\right] \\
& + \textnormal{Tr}\left[ (M^{\trans} F_a)^{+} M^{\trans} F_a \left[I_T \otimes \left(K_1 S_{-1} \Sigma_{v_{-1}}S_{-1}^{\trans} K_1^{\trans}  \right)  \right] \right], \\
 g_i &= \textnormal{Tr}\left[H_i^{+} [I_T \otimes (S_i\Sigma_{v_i}S_i^{\trans})](H_i^{+})^{\trans}\right], \\
  G_i &= ((S_iC_i)^{+})^{\trans} (S_iC_i)^{+}.
\end{align*}
Then, $\textnormal{Tr}(\Lambda^{+}) = l_1 + \textnormal{Tr}(L_1 \Sigma_{\tilde{r}_{-1}})$ and $\textnormal{Tr}(\Sigma_{e_i}) = g_i + T \: \textnormal{Tr}(G_i \Sigma_{\tilde{r}_i})$, where $\Sigma_{\tilde{r}_{-1}} = \textnormal{diag}(\Sigma_{\tilde{r}_2},\cdots,\Sigma_{\tilde{r}_N})$.
\end{lemma}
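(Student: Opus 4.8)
The plan is to prove the two identities separately, each by reducing the pseudo-inverse of a ``sandwiched'' matrix $A^{\trans}W^{-1}A$ to a closed form. The single auxiliary fact driving everything is: if $A$ has full row rank and $W>0$, then $(A^{\trans}W^{-1}A)^{+} = A^{+}W(A^{+})^{\trans}$. I would establish this first by checking the four Moore--Penrose conditions, using $AA^{+}=I$ and the symmetry and idempotency of the projection $A^{+}A$. Since both claims ultimately reduce to this identity, verifying it cleanly is the natural starting point, and it is precisely here that the full-row-rank hypotheses are consumed.

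For the error-covariance identity I would start from $\Sigma_{e_i}=\tilde{H}_i^{+}$ in \eqref{eq:err_cov}. Writing $\tilde{H}_i = I_T\otimes Y_i$ with $Y_i=(S_iC_i)^{\trans}(S_i\Sigma_{v_i}S_i^{\trans}+\Sigma_{\tilde{r}_i})^{-1}(S_iC_i)$, and noting that full row rank of $C_i$ (together with full row rank of $S_i$) makes $S_iC_i$ full row rank, the auxiliary fact gives $Y_i^{+}=(S_iC_i)^{+}(S_i\Sigma_{v_i}S_i^{\trans}+\Sigma_{\tilde{r}_i})((S_iC_i)^{+})^{\trans}$. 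Then $\text{Tr}(\Sigma_{e_i})=T\,\text{Tr}(Y_i^{+})$, and cyclicity pulls out $G_i=((S_iC_i)^{+})^{\trans}(S_iC_i)^{+}$. Splitting the bracketed covariance into its $\Sigma_{v_i}$ and $\Sigma_{\tilde{r}_i}$ parts yields $\text{Tr}(\Sigma_{e_i})=g_i+T\,\text{Tr}(G_i\Sigma_{\tilde{r}_i})$, where the $\Sigma_{v_i}$ part is identified with $g_i$ after unfolding the Kronecker factorization of $H_i^{+}$. This part is routine once the pseudo-inverse identity is available.

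For the harder objective $\text{Tr}(\Lambda^{+})$, the goal is to show it is affine in $\Sigma_{\tilde{r}_{-1}}$. Writing $M_1 = M^{\trans}F_a$ and $\Lambda = M_1^{\trans}\Sigma_{v_P}^{-1}M_1$, I would first argue that $M_1$ is full row rank: $M$ is full column rank, and full row rank of $F(I)=F_a$ forces $\text{Null}(F_a^{\trans})=\{0\}$, so $F_a^{\trans}M$ retains independent columns and $M_1$ retains independent rows. The auxiliary fact then gives $\Lambda^{+}=M_1^{+}\Sigma_{v_P}(M_1^{+})^{\trans}$, reducing $\text{Tr}(\Lambda^{+})$ to a linear functional of $\Sigma_{v_P}$. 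Substituting $\Sigma_{v_P}=M^{\trans}\Sigma_{v_L}M+M^{\trans}F_x\tilde{H}^{+}F_x^{\trans}M$ from \eqref{eq:z_noise_var} and, since $H$ is full row rank, writing $\tilde{H}^{+}=H^{+}\Sigma_{v_R}(H^{+})^{\trans}$ with $\Sigma_{v_R}$ affine in $\Sigma_{\tilde{r}_{-1}}$, makes the entire expression affine in the design variables. The computation then hinges on two simplifications: using $F_x=F(I)(I_T\otimes B_1)$ and $H^{+}=I_T\otimes (S_{-1}C_{-1})^{+}$ to bring out $K_1=B_1(S_{-1}C_{-1})^{+}$, so that $F_x\tilde{H}^{+}F_x^{\trans}=F(I)[I_T\otimes K_1(S_{-1}\Sigma_{v_{-1}}S_{-1}^{\trans}+\Sigma_{\tilde{r}_{-1}})K_1^{\trans}]F(I)^{\trans}$; and using $M^{\trans}F(I)=M_1$ to collapse the surrounding factors onto the projection $\Pi_1\triangleq M_1^{+}M_1=(M^{\trans}F_a)^{+}M^{\trans}F_a$.

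The final and most delicate step is extracting $D_1$ and $L_1$. After the collapse, the $\Sigma_{\tilde{r}_{-1}}$-dependent contribution reads $\text{Tr}[\Pi_1(I_T\otimes K_1\Sigma_{\tilde{r}_{-1}}K_1^{\trans})]$. Because the second factor is block diagonal with identical $n_1\times n_1$ blocks, only the diagonal blocks $D_{1,i}$ of $\Pi_1$ survive the trace, giving $\sum_{i=1}^{T}\text{Tr}(D_{1,i}K_1\Sigma_{\tilde{r}_{-1}}K_1^{\trans})=\text{Tr}(K_1^{\trans}D_1K_1\Sigma_{\tilde{r}_{-1}})=\text{Tr}(L_1\Sigma_{\tilde{r}_{-1}})$, while the $\Sigma_{v_{-1}}$ part together with the $M^{\trans}\Sigma_{v_L}M$ term assembles into $l_1$. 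I expect the main obstacle to be bookkeeping rather than concept: one must track the Kronecker and block structure carefully so that $\Pi_1$ interacts with the block-diagonal noise term exactly through its diagonal blocks $D_{1,i}$, producing $D_1=\sum_{i=1}^{T}D_{1,i}$ as claimed. All the conceptual content lives in the two pseudo-inverse reductions and the full-rank arguments; the remainder is disciplined index-chasing.
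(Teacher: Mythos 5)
Your proposal follows essentially the same route as the paper: the same key identity $(A^{\trans}W^{-1}A)^{+}=A^{+}W(A^{+})^{\trans}$ for full-row-rank $A$ (the paper's Lemma~\ref{lem:H_til_pinv}, which it proves via a Cholesky factorization rather than by checking the Moore--Penrose axioms), the same full-row-rank arguments for $S_iC_i$ and $M_1=M^{\trans}F_a$, the same splitting of $\Sigma_{r_i}$ and $\Sigma_{v_P}$ into noise-independent and $\Sigma_{\tilde r}$-dependent parts, and the same collapse onto the projection $M_1^{+}M_1$ whose diagonal blocks produce $D_1$ and $L_1$. The argument is correct and no further comparison is needed.
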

\begin{proof}
From \eqref{eq:err_cov}, $\Sigma_{e_i} = \tilde{H}_i^{+}$, where $\tilde{H}_i = H_i^{\trans} \Sigma_{r_i}^{-1} H_i$ and $H_i = I_T \otimes S_iC_i$. Since, $S_i$ and $C_i$ are assumed to be full row rank, $H_i$ is full row rank. Next, we have
\begin{align*}
\tilde{H}_i^{+} &\overset{(a)}{=} H_i^{+} \Sigma_{r_i}( H_i^{+})^{\trans} \\
& \overset{\eqref{eq:limited_meas_tag}}{=} \underbrace{H_i^{+} [I_T \otimes (S_i\Sigma_{v_i}S_i^{\trans})](H_i^{+})^{\trans}}_{U_i}  + H_i^{+}[I_T \otimes \Sigma_{\tilde{r}_i}]( H_i^{+})^{\trans} \\
& \overset{\eqref{eq:limited_meas_tag}}{=} U_i + I_T \otimes [(S_iC_i)^{+}  \Sigma_{\tilde{r}_i} ((S_iC_i)^{+})^{\trans} ]\\
& \Rightarrow \textnormal{Tr}(\tilde{H}_i^{+}) = g_i + T \:\textnormal{Tr}(G_i \Sigma_{\tilde{r}_i} ),
\end{align*}
where $(a)$ follows from Lemma \ref{lem:H_til_pinv} in the Appendix. 

Next, from \eqref{eq:dist_H1}, $\Lambda = M_1^{\trans} \Sigma_{v_P}^{-1} M_1$ where $M_1=  M^{\trans} F_a$. Since $M^{\trans}$ and $F_a = F(I)$ are assumed to be full row rank, $M_1$ is full row rank. Next, we have
\begin{align*}
\Lambda^{+} & \overset{(b)}{=} M_1^{+}  \Sigma_{v_P} (M_1^{+})^{\trans} \\
& \overset{\eqref{eq:z_noise_var},\eqref{eq:block_meas}, (b)}{=}  M_1^{+}  M^{\trans }\Sigma_{v_L} M (M_1^{+})^{\trans} \\
&+ M_1^{+} M_1 (I_T\otimes B_1) H^{+}\Sigma_{v_R} (H^{+})^{\trans} (I_T\otimes B_1^{\trans}) M_1^{\trans}(M_1^{+})^{\trans} \\
& \overset{\eqref{eq:share_meas_2}}{=} M_1^{+}  M^{\trans }\Sigma_{v_L} M (M_1^{+})^{\trans} \\
 &+ M_1^{+} M_1 \left[I_T \otimes \left(K_1 S_{-1} \Sigma_{v_{-1}}S_{-1}^{\trans}  K_1^{\trans}  \right) \right] M_1^{+} M_1 \\
 &+ M_1^{+} M_1 \left[I_T \otimes \left(K_1  \Sigma_{\tilde{r}_{-1}} K_1^{\trans}  \right) \right] M_1^{+} M_1,\\
 \Rightarrow &\textnormal{Tr}(\Lambda^{+}) \overset{(c)}{=} l_1 + \textnormal{Tr}(D_1 K_1 \Sigma_{\tilde{r}_{-1}} K_1^{\trans})= l_1 + \textnormal{Tr}(L_1 \Sigma_{\tilde{r}_{-1}}).
\end{align*}
where $(b)$ follows from Lemma \ref{lem:H_til_pinv} in the Appendix, and $(c)$ follows from the definition of $D_1$ and trivial algebraic manipulation.
\end{proof}

Using the above theorem, \eqref{eq:opt prob} is equivalent to the following semidefinite optimization problem \cite{LV-SB:06}
\begin{align} \label{eq:opt prob1}
&\underset{\textstyle \Sigma_{\tilde{r}_2}\geq0,\cdots, \Sigma_{\tilde{r}_N}\geq0} {\min} &&  \text{Tr}(L_1 \Sigma_{\tilde{r}_{-1}}) + l_1\\\nonumber
&\qquad\qquad \text{s.t.}   &&\text{Tr}(G_i \Sigma_{\tilde{r}_i}) \geq \frac{\epsilon_i'-g_i}{T}:=\epsilon_i\geq 0,     
\end{align}
for $i=2,3,\cdots,N$, which can be solved using standard semidefinite
optimization algorithms \cite{LV-SB:06}. This analysis allows us to
design optimal noisy privacy mechanisms.}

\section{Simulation Example}
\blue{We consider a power network model of the IEEE 39-bus test case
  \cite{Athay:79} consisting of $10$ generators interconnected by
  transmission lines whose resistances are assumed to be
  negligible. Each generator is modeled according to the following
  second-order swing equation \cite{PK:94}:
\begin{align} \label{eq:gen_dynamics}
M_i \ddot{\theta_i} + D_i \dot{\theta_i} = P_i- \sum_{k=1}^{n}\frac{E_i E_k}{X_{ik}}\sin(\theta_i-\theta_k),
\end{align}
where $\theta_i, M_i, D_i,E_i$ and $P_i$ denote the rotor angle, moment of inertia, damping coefficient, internal voltage and mechanical power input of the $i^{\text{th}}$ generator, respectively. Further, $X_{ij}$ denotes the reactance of the transmission line connecting generators $i$ and $j$ ($X_{ij}=\infty$ if they are not connected). We linearize \eqref{eq:gen_dynamics} around an equilibrium point to obtain the following collective small-signal model:
\begin{align} \label{eq:power_net_dyn}
\begin{bmatrix} d\dot{\theta} \\ d\ddot{\theta} \end{bmatrix}&= \underbrace{\begin{bmatrix} 0 & I\\-M^{-1}L & -M^{-1}D \end{bmatrix}}_{\tilde{A}_c} \underbrace{\begin{bmatrix} d\theta \\d \dot{\theta}\end{bmatrix}}_{\tilde{x}(t)} + \underbrace{\begin{bmatrix} 0 \\ M^{-1}B \end{bmatrix}}_{ \tilde{B}^a_c}\tilde{a}(t),
\end{align}
where $d\theta$ denotes a small deviation of $\theta=\begin{bmatrix} \theta_1 & \theta_2 & \cdots & \theta_{10}\end{bmatrix}^{T}$ from the equilibrium value, $M=\text{diag}(M_1,M_2,\cdots,M_{10})$, $D=\text{diag}(D_1,D_2,\cdots,D_{10})$, and $L$ is a symmetric Laplacian matrix given by 
\begin{align}
L_{ij} = \begin{cases} -\frac{E_i E_j}{X_{ij}} \cos(\theta_i-\theta_j) \quad &\text{for} \quad i\neq j,  \\ -\sum\limits_{\substack{j=1\\j\neq i}}^{n} L_{ij} \quad &\text{for} \quad i=j. \end{cases}
\end{align}

Further, $\tilde{a}$ models small malicious alterations (attacks) in the mechanical power input of the generators that need to be detected. We assume that generators $\{1,4,8\}$ are under attack. Thus, $B = \begin{bmatrix} {\bf e_1, e_4,e_8} \end{bmatrix}$, where ${\bf e_j}$ denotes the $j^{\text{th}}$ canonical vector. We assume that the power network is divided into 3 subsystems consisting of generators $\{1,2,3\}$, $\{4,5,6,7\}$ and $\{8,9,10\}$. Accordingly, we permute the state vector in \eqref{eq:power_net_dyn} using a permutation matrix $\Pi$ such that $\Pi \tilde{x} = x = [x_1^{\trans},x_2^{\trans},x_3^{\trans}]^{\trans}$, where $x_i$ consists of rotor angles and velocities of all generators in Subsystem $i$. The transformed system is given by $\dot{x} = A_c x + B_c^a \tilde{a}(t)$, where $A_c = \Pi \tilde{A}_c \Pi^{-1}$ and $B^a_c = \Pi \tilde{B}^a_c$. Next, we sample this continuous time system with sampling time $T_s = 0.1$ to obtain a discrete-time system $x(k+1) =Ax(k) + B^a\tilde{a}(k)$ with  $A = e^{A_cT_s}$ and $B^a=\left(\int\limits_{t=0}^{T_s} e^{A_c\tau}d\tau\right)B_c^a$. We assume that the discrete-time process dynamics are affected by process noise according to \eqref{eq:ss_state}. The rotor angle and the angular velocity of all generators are measured using Phasor Measurement Units (PMUs) according to the noisy model \eqref{eq:ss_output}. The time horizon is $T=3$.

The generator voltage and angle values are obtained from \cite{Athay:79}. We fix the damping coefficient for each generator as $10$, and the moment of inertia values are chosen as $M=[70,10,40,30,70,30,90,80,40,50]$. The reactance matrix $X$ is generated randomly, where each entry of $X$ is distributed independently according to $\mathcal{N}(0, 0.01)$. We focus on the attack detection for Subsystem $1$, where Subsystems $2$ and $3$ use privacy mechanisms to share their measurements with
Subsystem 1. The parameters of Subsystem 1 can be extracted from $A, B^a$ as $A = \begin{bmatrix} A_1 \:\; B_1 \\ *\end{bmatrix}$ and $B^a = \text{blockdiag}(B_1^a,*,*)$. The noise covariances are $\Sigma_{w_1}=0.5I_6$ and $\Sigma_{v_1}= \Sigma_{v_3} =I_4$ and $\Sigma_{v_2} =0.5I_6$. 

We consider the following three cases of privacy
mechanisms for Subsystems 2 and 3:
\begin{itemize}
\item $\mc{M}^{(0)} = \{\mc{M}_2^{(0)}, \mc{M}_3^{(0)}\}$: Subsystems
  2 and 3 do not use any privacy mechanisms and share actual
  measurements, i.e.,
  $S_2 = I_8, S_3 = I_6, \Sigma_{\tilde{r}_{2}} = 0$, and
  $\Sigma_{\tilde{r}_{3}} = 0$.
\item $\mc{M}^{(1)}$: Subsystem 2 does not use any privacy mechanism
  ($S_2 = I_8, \Sigma_{\tilde{r}_{2}} = 0 $) while Subsystem 3 shares
  noisy measurements of generators $\{8,9\}$\\
  ($S_3 = \left[ {\bf e_1,e_2,e_3,e_4}\right]^{\trans}$,
  $\Sigma_{\tilde{r}_{3}} = I_4$).
\item $\mc{M}^{(2)}$: Subsystems 2 and 3 share noisy measurements of
  generators $\{4,5,6\}$ and $\{8,9\}$, respectively.
  ($S_2 = \left[{\bf e_1,e_2,e_3,e_4, e_5,e_6}\right]^{\trans}, S_3 =
  \left[{\bf e_1,e_2,e_3,e_4}\right]^{\trans}, \Sigma_{\tilde{r}_{2}}
  = I_6$, and $\Sigma_{\tilde{r}_{3}} = I_4$).
\end{itemize}
}
Using Lemma \ref{eq:suff_cond_priv_order}, it can be easily verified
that the following privacy ordering holds:
$\mc{M}^{(2)}>\mc{M}^{(1)} >\mc{M}^{(0)}$. Recall that the detection
performance is completely characterized by $P_F$ and the detection
parameters $(q,\lambda)$. We choose $P_F = 0.05$ for all the
cases. Let $(q^{(k)},\lambda^{(k)})$, $k = 0,1,2$ denote the detection
parameters for the above three cases. Recall that the parameter $q$
depends only the system parameters, whereas the parameter $\lambda$
depends on the system parameters as well as the attack values. For the
above cases, we have $q^{(0)} = \blue{18}, q^{(1)} =\blue{12}$ and
$q^{(3)}=\blue{6}$. Recalling \eqref{eq:dist_H1}, the value of
$\lambda^{(k)} = a^{\trans} \Lambda^{(k)} a$ can lie anywhere between
$[0,\infty)$ depending on the attack value $a$. Thus, for simplicity,
we present the results in this section in terms of $\lambda^{(k)}$.

\begin{figure}[t]
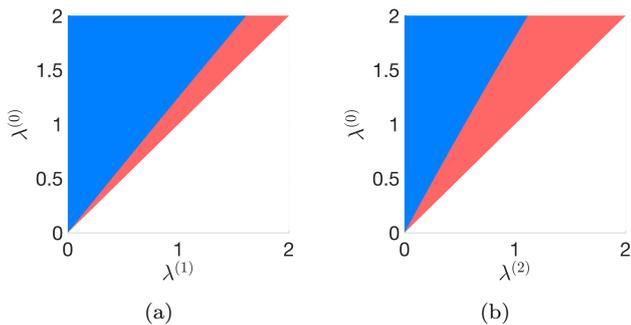

  \centering
  \subfigure[]{
  \includegraphics[width=.46\columnwidth]{./img/Comparison1new} \label{fig:limit_meas_comp1}} 
  \subfigure[]{
  \includegraphics[width=.46\columnwidth]{./img/Comparison2new} \label{fig:limit_meas_comp2}}
\caption{Comparison between detection performance of case 0 with: (a)
  case 1, and (b) case 2. In the blue region, case 0 performs better
  than case0/case 1, and vice versa in red region. Since
  $\lambda^{(0)} \geq \lambda^{(k)}$ for $k={1,2}$ (c.f. Lemma
  \ref{thm:limit_info}), the white region is inadmissible.}
  \label{fig:limit_meas_comp}
\end{figure}

We aim to compare the detection performance of case 0 with cases 1 and
2, respectively. We are interested in identifying the ranges of the
detection parameters for which one case performs better than the
other. As mentioned previously, the parameters $q^{(k)}$ are fixed for
the three cases, so we compare the performance for different values of
the parameter $\lambda^{(k)}$. Fig. \ref{fig:limit_meas_comp} presents
the performance comparison of case 0 with case 1
(Fig. \ref{fig:limit_meas_comp1}) and case 2
(Fig. \ref{fig:limit_meas_comp1}). Any point $(x,y)$ in the colored
regions are achievable by an attack, i.e., there exists an attack $a$
such that $a^{\trans} \Lambda^{(k)} a = x$ and
$a^{\trans} \Lambda^{(0)} a = y$, whereas the white region is
inadmissible (see \eqref{eq:det_param_relation}). The blue region
corresponds to the pairs $(\lambda^{(k)},\lambda^{(0)})$ for which
case 0 performs better than case $k$, i.e.,
$P_D(q^{(0)},\lambda^{(0)},P_F)\geq P_D(q^{(k)},\lambda^{(k)},P_F)$
for $k=1,2$. In the red region, case $k$ performs better that case 0,
$k=1,2$.

We observe that case 0 performs better than case $k$ if
$\frac{\lambda^{(0)}}{\lambda^{(k)}}$ is large, and vice versa. This
shows that if the attack vector $a$ is such that
$\frac{\lambda^{(0)}}{\lambda^{(k)}}$ is small, then the detection
performance corresponding to a more private mechanism
($\mc{M}^{(k)} >\mc{M}^{(0)}$) is better. This implies that there is
non-strict trade-off between privacy and detection performance.  This
counter-intuitive result is due to the suboptimality of the GLRT used
to perform detection, as explained before (c.f. discussion above
Remark \ref{rem:comp_vs_sim_test}). Further, we observe that the red
region of Fig. \ref{fig:limit_meas_comp2} is larger than (and
contains) the red region of Fig. \ref{fig:limit_meas_comp2}. This is
because $\mc{M}^{(2)}$ is more private than $\mc{M}^{(1)}$.

Next, we consider the case where Subsystems $2$ and $3$ implement
their privacy mechanisms by only adding artificial noise in
\eqref{eq:limited_meas}. Thus, $S_2 = I_8, S_3 = I_6$, and the
artificial noise covariances are given by
$\Sigma_{\tilde{r}_{2}} =\sigma^2 I_8$ and
$\Sigma_{\tilde{r}_{3}} = \sigma^2 I_6$. The attack on Subsystem $1$ (that is, on generator $1$) is
$\tilde{a}(k) = 2500$ for $k=0,1,2$. Clearly, as the noise
level $\sigma$ increases, the privacy level also
increases. Fig. \ref{fig:Pd_vs_sigma} shows the detection performance
of Subsystem $1$ for varying noise level $\sigma$. We observe that the
detection performance is a decreasing function of the noise level
(c.f. Corollary \ref{cor:limit_meas_via_add_noise}), implying a strict
trade-off between detection performance and privacy in this case.
\blue{Finally, we illustrate this strict trade-off by also explicitly solving the optimization problem \eqref{eq:opt prob1} and computing the optimal noise covariance matrices. We fix the same desired privacy level for Subsystems 2 and 3: $\epsilon_1 = \epsilon_2 = \epsilon$. Fig. \ref{fig:Opt_prob} shows that the optimal cost in \eqref{eq:opt prob1} increases with $\epsilon$, indicating that the detection performance decreases as privacy level increases.}

\begin{figure}[h!]
\centering
\includegraphics[width=\columnwidth]{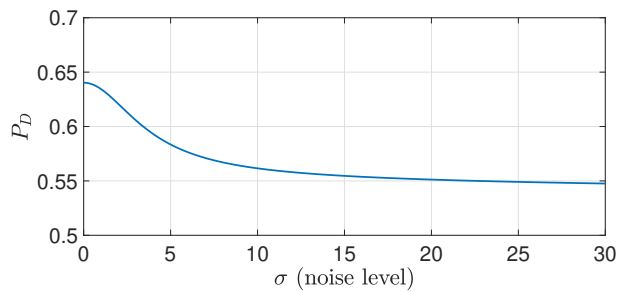}
\caption{Detection performance for varying level of  noise parameter $\sigma$.}
\label{fig:Pd_vs_sigma}
\end{figure}

\begin{figure}[h!]
\centering
\includegraphics[width=\columnwidth]{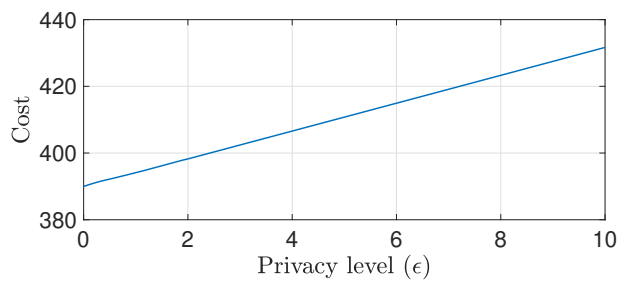}
\caption{Optimal cost of \eqref{eq:opt prob1} as a function of the privacy level $\epsilon$.}
\label{fig:Opt_prob}
\end{figure}

\section{Conclusion}
We study an attack detection problem in interconnected dynamical
systems where each subsystem is tasked with detection of local attacks
without any knowledge of the dynamics of other subsystems and their
interconnection signals. The subsystems share measurements among
themselves to aid attack detection, but they also limit the
amount and quality of the shared measurements due to privacy concerns. We
show that there exists a non-strict trade-off between privacy and
detection performance, and in some cases, sharing less measurements
can improve the detection performance. We reason that this
counter-intuitive result is due the suboptimality of the considered
$\chi^2$ test.

Future work includes exploring if this counter-intuitive trade-off
exist for alternative detection schemes (for instance, unknown-input
observers) and for other types of statistical tests. Also, recursive
schemes to compute the state estimates, eliminate interconnections and
compute the detection probability will be explored. \blue{Finally, privacy ordering of two mechanisms irrespective of their subspaces of shared measurement will be defined using suitable weighing matrix for each subspace.}

\setcounter{section}{1} 
\renewcommand{\thetheorem}{A.\arabic{theorem}}
\section*{APPENDIX}

\begin{lemma} \label{lem:WLS} 
The optimal solutions of the following weighted least squares problem:
\begin{align}\label{eq:WLS}
\underset{x}{\min} \quad J(x) = (y-Hx)^{\trans} \Sigma^{-1} (y-Hx),
\end{align}
with $\Sigma>0$ are given by
\begin{align} \label{eq:WLS_soln}
x^{*} = \tilde{H}^{+} H^{\trans} \Sigma^{-1} y + (I-\tilde{H}^{+} \tilde{H}) d,
\end{align}
where $\tilde{H} = H^{\trans} \Sigma^{-1} H,$ and $d$ is any real vector of appropriate dimension. Further, the optimal value of the cost is
\begin{align}\label{eq:WLS_cost}
J(x^{*}) = y^{\trans}( \Sigma^{-1}- \Sigma^{-1} H\tilde{H}^{+} H^{\trans} \Sigma^{-1})y.
\end{align} 
\end{lemma}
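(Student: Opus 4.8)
The plan is to treat \eqref{eq:WLS} as an unconstrained convex quadratic minimization and to characterize its entire solution set through the normal equations, being careful about the rank-deficient case. First I would expand the objective as $J(x) = y^{\trans}\Sigma^{-1}y - 2x^{\trans}H^{\trans}\Sigma^{-1}y + x^{\trans}\tilde{H}x$, with $\tilde{H} = H^{\trans}\Sigma^{-1}H$. Since $\Sigma>0$, the Hessian $2\tilde{H}\geq 0$, so $J$ is convex and every stationary point is a global minimizer. Setting the gradient $\nabla J(x) = 2(\tilde{H}x - H^{\trans}\Sigma^{-1}y)$ to zero yields the normal equations $\tilde{H}x = H^{\trans}\Sigma^{-1}y$, so the task reduces to solving this linear system.

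The crux is to show that the normal equations are always consistent, even when $H$ (and hence $\tilde{H}$) fails to have full column rank. I would exploit positive-definiteness of $\Sigma^{-1}$ by writing a factorization $\Sigma^{-1} = L^{\trans}L$ with $L$ invertible, so that $\tilde{H} = (LH)^{\trans}(LH)$ and $H^{\trans}\Sigma^{-1} = (LH)^{\trans}L$. The standard range identity $\text{Im}(A^{\trans}A) = \text{Im}(A^{\trans})$ applied to $A = LH$, together with invertibility of $L$, then gives $\text{Im}(\tilde{H}) = \text{Im}((LH)^{\trans}) = \text{Im}(H^{\trans}L^{\trans}) = \text{Im}(H^{\trans}\Sigma^{-1})$. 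Consequently $H^{\trans}\Sigma^{-1}y \in \text{Im}(\tilde{H})$, which is exactly the consistency condition. Invoking the standard pseudoinverse parametrization of the solution set of a consistent symmetric system, the complete set of minimizers is $x^{*} = \tilde{H}^{+}H^{\trans}\Sigma^{-1}y + (I-\tilde{H}^{+}\tilde{H})d$ for arbitrary $d$, matching \eqref{eq:WLS_soln}.

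Finally, I would substitute the particular solution $x_0 = \tilde{H}^{+}H^{\trans}\Sigma^{-1}y$ back into $J$; the free term $(I-\tilde{H}^{+}\tilde{H})d$ lies in $\text{Null}(\tilde{H})$ and, since $\tilde{H}n = 0$ forces $Hn = 0$ (as $L$ is invertible), it leaves the residual $y - Hx^{*}$ and thus the cost unchanged. Using symmetry of $\tilde{H}^{+}$ (valid since $\tilde{H}$ is symmetric positive semidefinite) together with the Moore--Penrose identity $\tilde{H}^{+}\tilde{H}\tilde{H}^{+} = \tilde{H}^{+}$, both the cross term $x_0^{\trans}H^{\trans}\Sigma^{-1}y$ and the quadratic term $x_0^{\trans}\tilde{H}x_0$ collapse to the same expression $y^{\trans}\Sigma^{-1}H\tilde{H}^{+}H^{\trans}\Sigma^{-1}y$, which yields $J(x^{*}) = y^{\trans}(\Sigma^{-1} - \Sigma^{-1}H\tilde{H}^{+}H^{\trans}\Sigma^{-1})y$, i.e.\ \eqref{eq:WLS_cost}.

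I expect the main obstacle to be the consistency/range step, namely establishing $\text{Im}(\tilde{H}) = \text{Im}(H^{\trans}\Sigma^{-1})$ in the rank-deficient setting; once this identity is in hand, the solution parametrization and the cost computation reduce to routine algebra with pseudoinverse identities.
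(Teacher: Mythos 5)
Your proof is correct. Note that the paper itself states Lemma~\ref{lem:WLS} in the appendix without any proof, treating it as a standard weighted least-squares fact, so there is no in-paper argument to compare against; your derivation via the normal equations $\tilde{H}x = H^{\trans}\Sigma^{-1}y$, the consistency identity $\textnormal{Im}(\tilde{H}) = \textnormal{Im}(H^{\trans}\Sigma^{-1})$ obtained from the factorization $\Sigma^{-1}=L^{\trans}L$, the pseudoinverse parametrization of the solution set, and the cost evaluation using $\tilde{H}^{+}\tilde{H}\tilde{H}^{+}=\tilde{H}^{+}$ is exactly the standard route and handles the rank-deficient case properly (including the observation that $\textnormal{Null}(\tilde{H})=\textnormal{Null}(H)$, so the free term does not affect the residual).
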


\begin{lemma} \label{lem:schur_comp} 
Let $\left[\begin{smallmatrix}  A & B \\ B^{\trans} & D\end{smallmatrix}\right]$ be a positive definite matrix with $A>0$, $D\geq 0$. Further, let $M\geq0$. Then,
\begin{align*}
\left[\begin{smallmatrix}  A & B \\ B^{\trans} & D\end{smallmatrix}\right]^{-1} \geq \left[\begin{smallmatrix}  (A+M)^{-1} & 0 \\ 0 & 0\end{smallmatrix}\right],
\end{align*}
\end{lemma}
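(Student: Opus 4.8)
The plan is to reduce the claimed matrix inequality to a family of scalar quadratic-form inequalities and to eliminate the extra coordinates by partial minimization. Writing $P \triangleq \left[\begin{smallmatrix} A & B \\ B^{\trans} & D\end{smallmatrix}\right]$, the target is equivalent to showing that for every pair of compatible vectors $(u,w)$,
\begin{align*}
\begin{bmatrix} u \\ w \end{bmatrix}^{\trans} P^{-1} \begin{bmatrix} u \\ w \end{bmatrix} \geq u^{\trans}(A+M)^{-1} u,
\end{align*}
because the right-hand matrix $\left[\begin{smallmatrix}(A+M)^{-1} & 0 \\ 0 & 0\end{smallmatrix}\right]$ produces the quadratic form $u^{\trans}(A+M)^{-1}u$ regardless of $w$. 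I would first record that $P>0$ forces $A>0$ and $P^{-1}>0$, so every inverse below is well defined.

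Since the right-hand side is independent of $w$, it suffices to lower-bound the minimum of the left-hand side over $w$. Writing $P^{-1} = \left[\begin{smallmatrix} E & F \\ F^{\trans} & G\end{smallmatrix}\right]$ with $G>0$ and completing the square in $w$ (minimizer $w^{\ast}=-G^{-1}F^{\trans}u$), I obtain
\begin{align*}
\min_{w} \begin{bmatrix} u \\ w \end{bmatrix}^{\trans} P^{-1} \begin{bmatrix} u \\ w \end{bmatrix} = u^{\trans}\left(E - F G^{-1} F^{\trans}\right) u .
\end{align*}
The crucial identity is that the Schur complement of the bottom-right block of $P^{-1}$ equals the inverse of the top-left block of $(P^{-1})^{-1}=P$, that is $E - F G^{-1} F^{\trans} = A^{-1}$; hence the partial minimum is exactly $u^{\trans}A^{-1}u$.

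To finish, I would compare $A^{-1}$ with $(A+M)^{-1}$. Since $M\geq 0$ we have $0<A\leq A+M$, and because matrix inversion is order-reversing on positive definite matrices, $A^{-1}\geq (A+M)^{-1}$. Chaining the two bounds yields, for all $(u,w)$,
\begin{align*}
\begin{bmatrix} u \\ w \end{bmatrix}^{\trans} P^{-1} \begin{bmatrix} u \\ w \end{bmatrix} \geq u^{\trans}A^{-1}u \geq u^{\trans}(A+M)^{-1}u,
\end{align*}
which is precisely the desired inequality.

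The step I expect to require the most care is the identity $E-FG^{-1}F^{\trans}=A^{-1}$. The clean route is to apply the standard block-inverse fact---the inverse of a Schur complement is the corresponding diagonal block of the inverse---to the matrix $P^{-1}$, whose inverse is $P$; alternatively one can expand $P^{-1}$ via the Schur complement $S=A-BD^{-1}B^{\trans}$ of $D$ and verify the bottom-right computation directly. I would also flag why the partial-minimization reduction is genuinely necessary: a naive attempt to factor $P^{-1}$ and compare only its top-left block against $(A+M)^{-1}$ fails, since the off-diagonal coupling is discarded and the resulting block comparison is not positive semidefinite; the minimization over $w$ is exactly what accounts for that coupling.
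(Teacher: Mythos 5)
Your proof is correct, and it reaches the result by a genuinely different (variational) route than the paper's (algebraic) one. The paper factors $P^{-1}$ by the block-LDL congruence $P^{-1}=T\,\mathrm{diag}\bigl(A^{-1},(D-B^{\trans}A^{-1}B)^{-1}\bigr)T^{\trans}$ with $T=\left[\begin{smallmatrix}I & -A^{-1}B\\ 0 & I\end{smallmatrix}\right]$, observes that the same congruence fixes $\mathrm{diag}\bigl((A+M)^{-1},0\bigr)$ because the off-diagonal block of $T$ is absorbed by the zero blocks, and then simply compares the two block-diagonal middle factors using $A^{-1}\geq(A+M)^{-1}$ and positive definiteness of the inverted Schur complement; congruence preserves the semidefinite order, so the conclusion follows. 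You instead test the inequality against an arbitrary vector $(u,w)$, eliminate $w$ by completing the square in the blocks $E,F,G$ of $P^{-1}$, and invoke the identity $E-FG^{-1}F^{\trans}=A^{-1}$, i.e., that the Schur complement of the $(2,2)$ block of $P^{-1}$ inverts to the $(1,1)$ block of $P$. The two arguments share the essential inequality $A^{-1}\geq(A+M)^{-1}$ and are dual views of the same Schur-complement structure: your completion of the square is exactly the quadratic-form shadow of the paper's congruence. The paper's version is slightly more economical, since it never needs to name the blocks of $P^{-1}$ or justify the block-inverse identity; yours makes it more transparent why the $w$-coordinates may be discarded, and your closing observation that a naive blockwise comparison of $P^{-1}$ against $\mathrm{diag}((A+M)^{-1},0)$ would not suffice is a worthwhile sanity check. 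All the invertibility you use ($G>0$ and the identity $E-FG^{-1}F^{\trans}=A^{-1}$) is indeed available because $P>0$ implies $P^{-1}>0$.
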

\begin{proof}
Using the Schur complement, we have
\begin{align*}
\left[\begin{smallmatrix}  A & B \\ B^{\trans} & D\end{smallmatrix}\right]^{-1} = \left[\begin{smallmatrix} I & -A^{-1}B \\ 0 & I \end{smallmatrix}\right] \left[\begin{smallmatrix} A^{-1} & 0 \\ 0 & (D-B^{\trans} A^{-1} B)^{-1} \end{smallmatrix}\right] \left[\begin{smallmatrix}  I & 0 \\ -B^{\trans} A^{-1} & I \end{smallmatrix}\right], 
\end{align*}
where the Schur complement $D-B^{\trans} A^{-1} B>0$. Further,
\begin{align*}
\left[\begin{smallmatrix}  (A+M)^{-1} & 0 \\ 0 & 0\end{smallmatrix}\right] = \left[\begin{smallmatrix} I & -A^{-1}B \\ 0 & I \end{smallmatrix}\right] \left[\begin{smallmatrix} (A+M)^{-1} & 0 \\ 0 & 0 \end{smallmatrix}\right] \left[\begin{smallmatrix}  I & 0 \\ -B^{\trans} A^{-1} & I \end{smallmatrix}\right]
\end{align*}
Since $A+M \geq A$, $A^{-1} \geq (A+M)^{-1}$. Thus,
\begin{align*}
\left[\begin{smallmatrix} A^{-1}-(A+M)^{-1}  & 0 \\ 0 & (D-B^{\trans} A^{-1} B)^{-1} \end{smallmatrix}\right] \geq 0,
\end{align*}
and the result follows.
\end{proof}

\begin{lemma} \label{lem:sigma_ordering}
Let $\Sigma>0 \in\real^{n\times n}$ and $\Sigma_a \geq 0 \in\real^{m\times m},$ with $m\leq n$, and let $S\in\real^{n\times m}$ be full (column) rank. Then,
\begin{align}
\Sigma^{-1} \geq S (S^{\trans} \Sigma S + \Sigma_{a})^{-1} S^{\trans}.
\end{align}
\end{lemma}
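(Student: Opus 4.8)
The plan is to prove the inequality through a single Schur-complement argument applied to a cleverly chosen auxiliary block matrix, extracting the claim by reading that complement in two different ways.

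First I would verify that the right-hand side is even well defined. Since $\Sigma>0$ and $S$ is full column rank, $S^{\trans}\Sigma S>0$; adding $\Sigma_a\geq 0$ preserves positive definiteness, so $S^{\trans}\Sigma S+\Sigma_a>0$ and its inverse exists. The key construction is then the symmetric block matrix
\[
N \triangleq \begin{bmatrix} \Sigma^{-1} & S \\ S^{\trans} & S^{\trans}\Sigma S+\Sigma_a \end{bmatrix}.
\]
Taking the Schur complement with respect to the $(1,1)$-block $\Sigma^{-1}>0$, the matrix $N$ is positive semidefinite if and only if the complement $(S^{\trans}\Sigma S+\Sigma_a)-S^{\trans}(\Sigma^{-1})^{-1}S=\Sigma_a$ is positive semidefinite, which holds by hypothesis. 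Hence $N\geq 0$. I would then read the very same $N$ through its other Schur complement, this time with respect to the $(2,2)$-block $S^{\trans}\Sigma S+\Sigma_a>0$: because $N\geq 0$, this complement must also be positive semidefinite, that is,
\[
\Sigma^{-1}-S(S^{\trans}\Sigma S+\Sigma_a)^{-1}S^{\trans}\geq 0,
\]
which is exactly the desired inequality.

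The step I would double-check is the bidirectional use of the Schur-complement criterion: one direction \emph{certifies} $N\geq 0$ via the cancellation $S^{\trans}\Sigma S-S^{\trans}\Sigma S=0$ that leaves only $\Sigma_a$, while the other direction \emph{extracts} the claim from $N\geq 0$. This is the whole content of the argument, so care must be taken that the pivot blocks are genuinely positive definite (which the full-column-rank assumption on $S$ guarantees), since the equivalence ``$N\geq 0$ iff complement $\geq 0$'' requires an invertible pivot.

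Should one prefer to avoid the block matrix, an entirely equivalent route is a congruence by $\Sigma^{1/2}$. Setting $T\triangleq\Sigma^{1/2}S$ (still full column rank, as $\Sigma^{1/2}$ is invertible), one has $S^{\trans}\Sigma S=T^{\trans}T$, and conjugating the target inequality by $\Sigma^{1/2}$ reduces it to $T(T^{\trans}T+\Sigma_a)^{-1}T^{\trans}\leq I$. This in turn follows from operator antitonicity of inversion, $(T^{\trans}T+\Sigma_a)^{-1}\leq (T^{\trans}T)^{-1}$ since $\Sigma_a\geq 0$, together with the observation that $T(T^{\trans}T)^{-1}T^{\trans}$ is the orthogonal projection onto $\textnormal{Im}(T)$ and is therefore bounded above by $I$. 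I expect the Schur-complement version to be the cleaner write-up, so I would present it as the main proof and relegate the congruence argument to a remark if space permits.
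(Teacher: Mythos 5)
Your proof is correct, but it takes a genuinely different route from the paper's. The paper proves this lemma by explicitly completing $S^{\trans}$ to an invertible map: it builds $R=\left[\begin{smallmatrix}(S^{\trans})^{+} & N\end{smallmatrix}\right]$ with $N$ a basis of $\textnormal{Null}(S^{\trans})$, defines $T$ so that $I_n=\left[\begin{smallmatrix}S^{\trans}\\ T\end{smallmatrix}\right]R$, rewrites both $\Sigma^{-1}$ and $S(S^{\trans}\Sigma S+\Sigma_a)^{-1}S^{\trans}$ as congruences by $\left[\begin{smallmatrix}S & T^{\trans}\end{smallmatrix}\right]$ of a full block inverse and of a padded corner inverse respectively, and then invokes a separate auxiliary result (Lemma~\ref{lem:schur_comp}) stating that $\left[\begin{smallmatrix}A & B\\ B^{\trans} & D\end{smallmatrix}\right]^{-1}\geq\left[\begin{smallmatrix}(A+M)^{-1} & 0\\ 0 & 0\end{smallmatrix}\right]$. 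Your argument bypasses all of that scaffolding: a single auxiliary matrix $\left[\begin{smallmatrix}\Sigma^{-1} & S\\ S^{\trans} & S^{\trans}\Sigma S+\Sigma_a\end{smallmatrix}\right]$, certified positive semidefinite by one Schur complement (which collapses to $\Sigma_a\geq 0$) and harvested by the other. Both proofs are Schur-complement arguments at heart, but yours is shorter, self-contained, and makes transparent exactly where each hypothesis enters ($\Sigma>0$ and full column rank of $S$ to make the two pivots invertible, $\Sigma_a\geq 0$ for the certification step); the paper's version, by contrast, factors the work through a reusable comparison lemma and an explicit change of basis, which is more machinery than this statement needs. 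Your alternative congruence argument via $T=\Sigma^{1/2}S$, operator antitonicity of inversion, and the projection bound $T(T^{\trans}T)^{-1}T^{\trans}\leq I$ is also sound and equally far from the paper's route.
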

\begin{proof}
  Since $S$ is full column rank, $S^{\trans} \Sigma S>0$,
  $S^{\trans} \Sigma S + \Sigma_{a}$ is invertible and
  $S^{+}S = I_{m} = S^{\trans} (S^{\trans})^{+}$. Thus,
  $I_n = \left[\begin{smallmatrix} S^{\trans} (S^{\trans})^{+} & 0 \\
      0 & I_{n-m} \end{smallmatrix}\right]$. Let
  $N\in\real^{n\times (n-m)}$ denote a matrix whose columns are the
  basis of $\text{Null}(S^{\trans})$. Then,
  $\left[\begin{smallmatrix} S^{\trans} (S^{\trans})^{+} &
      0 \end{smallmatrix}\right] = S^{\trans}
  \left[\begin{smallmatrix} (S^{\trans})^{+} &
      N \end{smallmatrix}\right] \triangleq S^{\trans} R.$ Since,
  $\text{Im}((S^{\trans})^{+}) = \text{Im}(S) \perp
  \text{Null}(S^{\trans})$, $R$ is non-singular. Let
  $T\triangleq \left[\begin{smallmatrix} 0 &
      I_{n-m}\end{smallmatrix}\right]R^{-1}$. Then, we have
  $I_n = \left[\begin{smallmatrix} S^{\trans} \\
      T \end{smallmatrix}\right] R = R \left[\begin{smallmatrix}
      S^{\trans} \\ T \end{smallmatrix}\right]$. Thus,
\begin{align*}
\Sigma^{-1} &= I_n^{\trans} (I_n \Sigma I_n^{\trans})^{-1} I_n \\
&= \left[\begin{smallmatrix} S & T^{\trans} \end{smallmatrix}\right] R^{\trans} \left( R \left[\begin{smallmatrix} S^{\trans} \\ T \end{smallmatrix}\right] \Sigma  \left[\begin{smallmatrix} S & T^{\trans} \end{smallmatrix}\right] R^{\trans} \right)^{-1}  R \left[\begin{smallmatrix} S^{\trans} \\ T \end{smallmatrix}\right] \\
& = \left[\begin{smallmatrix} S & T^{\trans} \end{smallmatrix}\right] \left(  \left[\begin{smallmatrix} S^{\trans} \\ T \end{smallmatrix}\right] \Sigma  \left[\begin{smallmatrix} S & T^{\trans} \end{smallmatrix}\right]  \right)^{-1}   \left[\begin{smallmatrix} S^{\trans} \\ T \end{smallmatrix}\right] \\
& = \left[\begin{smallmatrix} S & T^{\trans} \end{smallmatrix}\right] \left[\begin{smallmatrix} S^{\trans} \Sigma S & S^{\trans} \Sigma T^{\trans} \\ T \Sigma S & T \Sigma T^{\trans}\end{smallmatrix}\right] ^{-1} \left[\begin{smallmatrix} S^{\trans} \\ T \end{smallmatrix}\right], \quad \text{and} \\
S (S^{\trans} \Sigma S & + \Sigma_{a})^{-1} S^{\trans} = \left[\begin{smallmatrix} S & T^{\trans} \end{smallmatrix}\right] \left[\begin{smallmatrix} (S^{\trans} \Sigma S  + \Sigma_{a})^{-1} & 0 \\ 0 & 0\end{smallmatrix}\right] ^{-1} \left[\begin{smallmatrix} S^{\trans} \\ T \end{smallmatrix}\right].
\end{align*}
The result follows from Lemma \ref{lem:schur_comp}.
\end{proof}

\begin{lemma} \label{lem:qcqp} Let $M_1\geq M_2\geq 0$,
  $\lambda \geq 0$ and let $J(x) = x^{\trans} M_1 x$. Then, the
  maximum and minimum values of $J(x)$ subject to
  $x^{\trans} M_2 x = \lambda$ are given by $\lambda\mu_{max}$ and
  $\lambda \mu_{min}$ respectively, where $\mu_{max}$ and $\mu_{min}$
  are the largest and smallest generalized eigenvalues of $(M_1,M_2)$,
  respectively.
\end{lemma}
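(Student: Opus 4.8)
The plan is to recognize the problem of extremizing $J(x) = x^{\trans} M_1 x$ subject to $x^{\trans} M_2 x = \lambda$ as a \emph{generalized Rayleigh-quotient} problem, and to reduce it by a simultaneous congruence transformation to an ordinary quadratic optimization on a sphere, from which both extreme values can be read off directly. First I would treat the nondegenerate case $M_2 \succ 0$. Since $M_1$ and $M_2$ are symmetric with $M_2$ positive definite, they can be simultaneously diagonalized by congruence: there is a nonsingular $W$ with $W^{\trans} M_2 W = I$ and $W^{\trans} M_1 W = \Gamma := \text{diag}(\mu_1,\dots,\mu_n)$, where the $\mu_i$ are exactly the eigenvalues of $M_2^{-1/2} M_1 M_2^{-1/2}$, i.e. the generalized eigenvalues of $(M_1,M_2)$ satisfying $\det(M_1 - \mu_i M_2) = 0$.

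Substituting $x = Wy$, the constraint becomes $x^{\trans} M_2 x = y^{\trans} y = \lambda$ and the objective becomes $J = y^{\trans}\Gamma y = \sum_i \mu_i y_i^2$. Over the sphere $\|y\|^2 = \lambda$, which is compact for $\lambda \ge 0$, the weighted sum $\sum_i \mu_i y_i^2$ is maximized by concentrating all the mass $\lambda$ on a coordinate with $\mu_i = \mu_{max}$ and minimized by concentrating it on a coordinate with $\mu_i = \mu_{min}$, yielding the claimed extreme values $\lambda\mu_{max}$ and $\lambda\mu_{min}$. (The boundary case $\lambda = 0$ forces $y = 0$, hence $J = 0$, and the formulas hold trivially.)

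To make optimality transparent and to expose the geometric content, I would also record the Lagrangian characterization: stationarity of $L(x,\mu) = x^{\trans} M_1 x - \mu(x^{\trans} M_2 x - \lambda)$ gives $M_1 x = \mu M_2 x$, so every critical point is a generalized eigenvector, and its objective value is $x^{\trans} M_1 x = \mu\, x^{\trans} M_2 x = \mu\lambda$. Compactness of the feasible set (for $M_2 \succ 0$) guarantees the extrema are attained, so the sup and inf coincide with the max and min over the critical values, which are the extreme generalized eigenvalues. I note that the hypothesis $M_1 \ge M_2$ forces $\mu_i \ge 1$ for all $i$ (since $v^{\trans} M_1 v \ge v^{\trans} M_2 v$ on any generalized eigenvector), which is precisely what yields the trailing inequality $\lambda\mu_{min} \ge \lambda$ when the lemma is invoked in Theorem~\ref{thm:limit_info}.

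The main obstacle is the case where $M_2 \ge 0$ is singular. Then the feasible set $\{x : x^{\trans} M_2 x = \lambda\}$ is unbounded, and along any direction $n \in \text{Null}(M_2)$ with $n^{\trans} M_1 n > 0$ the objective grows without bound, so the constrained supremum is $+\infty$ and some generalized eigenvalues sit ``at infinity.'' I would resolve this by splitting $\real^{n} = \text{Im}(M_2) \oplus \text{Null}(M_2)$ and restricting the diagonalization to $\text{Im}(M_2)$, on which $M_2$ is positive definite; the finite generalized eigenvalues are precisely those of the restricted pencil, and the stated formulas then hold for $x$ lying in (or projected onto) $\text{Im}(M_2)$ — which is exactly the regime in which the lemma is applied, since there the constraint value $\lambda = x^{\trans} M_2 x$ pins $x$ to a component with nonzero $M_2$-energy. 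Making this restriction explicit is what keeps the statement and its use in Theorem~\ref{thm:limit_info} consistent.
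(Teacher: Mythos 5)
Your proposal is correct, and it strictly contains the paper's argument while going further. The paper's proof of Lemma~\ref{lem:qcqp} consists only of your ``Lagrangian characterization'' paragraph: it forms $l = x^{\trans}M_1 x - \mu(x^{\trans}M_2 x - \lambda)$, derives the stationarity condition $(M_1 - \mu M_2)x = 0$, concludes that every critical value equals $\lambda\mu$ for a generalized eigenvalue $\mu$, and stops there. That argument identifies the candidate critical values but does not by itself establish that the extrema are attained, nor which critical values are the max and min; your reduction via the congruence $W$ with $W^{\trans}M_2W = I$, $W^{\trans}M_1W = \Gamma$, which turns the problem into extremizing $\sum_i \mu_i y_i^2$ on the compact sphere $\|y\|^2=\lambda$, supplies exactly those missing pieces and is the cleaner route when $M_2>0$. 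You also correctly flag the two points the paper glosses over: (i) when $M_2\geq 0$ is singular the feasible set is invariant under translation along $\mathrm{Null}(M_2)$ and the supremum can be $+\infty$ unless one restricts to $\mathrm{Im}(M_2)$, which is the regime in which Theorem~\ref{thm:limit_info} actually invokes the lemma (there $\Lambda^{(1)}\geq\Lambda^{(2)}\geq 0$ need not be definite); and (ii) the hypothesis $M_1\geq M_2$ is what forces $\mu_{\min}\geq 1$ and hence the trailing inequality $\lambda\mu_{\min}\geq\lambda$ in \eqref{eq:det_param_relation}, a step the paper uses but never justifies inside the lemma's proof. In short: same core idea, but your version is the more complete one.
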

\begin{proof}
Consider the following optimization problem
\begin{align*}
\underset{x}{\min/\max} \quad J(x) = x^{\trans} M_1 x, \quad \text{subject to} \quad x^{\trans} M_2 x = \lambda.
\end{align*} 
The Lagrangian of this problem is given by
$l = x^{\trans} M_1 x - \mu (x^{\trans} M_2 x - \lambda)$, where
$\mu \in\real$ is the Lagrange multiplier. By differentiating $l$, the
first order optimality condition is given by $(M_1 -\mu M_2)
x=0$. Thus, $\mu$ is a generalized eigenvalue of $(M_1,M_2)$. Further,
using $M_1x = \mu M_2 x$, the cost at the optimum is given by
$\lambda \mu$ and the maximum and minimum values of the cost given in
the lemma follow.
\end{proof}

\blue{\begin{lemma} \label{lem:H_til_pinv}
Let $\tilde{H} = H^{\trans} \Sigma^{-1} H$ where $\Sigma>0$ and $H$ has a full row rank. Then, $\tilde{H}^{+} = H^{+}\Sigma (H^{+})^{\trans}$.
\end{lemma}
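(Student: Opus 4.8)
The plan is to verify directly that $X \triangleq H^{+}\Sigma(H^{+})^{\trans}$ satisfies the four defining Moore--Penrose conditions for $\tilde{H}^{+}$, namely $\tilde{H}X\tilde{H}=\tilde{H}$, $X\tilde{H}X=X$, and symmetry of both $\tilde{H}X$ and $X\tilde{H}$. The single fact that makes everything collapse is that, since $H$ has full row rank, $H^{+}=H^{\trans}(HH^{\trans})^{-1}$ is a \emph{right} inverse, i.e. $HH^{+}=I$ (and dually $(H^{+})^{\trans}H^{\trans}=I$). I would record this at the outset, together with the standard identities $H^{+}HH^{+}=H^{+}$ and $(H^{+}H)^{\trans}=H^{+}H$, and observe that $\tilde{H}=H^{\trans}\Sigma^{-1}H$ and $X$ are both symmetric (using $\Sigma^{\trans}=\Sigma$), so the two symmetry conditions coincide with the single requirement $\tilde{H}X=X\tilde{H}$.

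The computational heart is the product $\tilde{H}X$. Writing it out gives $\tilde{H}X=H^{\trans}\Sigma^{-1}(HH^{+})\Sigma(H^{+})^{\trans}$; substituting $HH^{+}=I$ leaves $\Sigma^{-1}\Sigma=I$ sandwiched in the middle, so $\tilde{H}X=H^{\trans}(H^{+})^{\trans}=(H^{+}H)^{\trans}=H^{+}H$. An identical telescoping, this time using $(H^{+})^{\trans}H^{\trans}=I$, yields $X\tilde{H}=H^{+}\Sigma(H^{+})^{\trans}H^{\trans}\Sigma^{-1}H=H^{+}H$ as well. Since $H^{+}H$ is a symmetric orthogonal projector, the two symmetry conditions hold immediately and $\tilde{H}X=X\tilde{H}$. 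The remaining two conditions then reduce to projector bookkeeping: for the first, $\tilde{H}X\tilde{H}=(H^{+}H)\tilde{H}=H^{+}HH^{\trans}\Sigma^{-1}H$, and since $H^{+}HH^{\trans}=H^{\trans}(HH^{\trans})^{-1}(HH^{\trans})=H^{\trans}$, this equals $H^{\trans}\Sigma^{-1}H=\tilde{H}$; for the second, $X\tilde{H}X=(H^{+}H)X=H^{+}HH^{+}\Sigma(H^{+})^{\trans}=H^{+}\Sigma(H^{+})^{\trans}=X$, using $H^{+}HH^{+}=H^{+}$. With all four conditions verified, uniqueness of the Moore--Penrose inverse gives $X=\tilde{H}^{+}$.

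There is essentially no genuine obstacle here beyond the disciplined use of the full-row-rank hypothesis: it is precisely this hypothesis that upgrades $HH^{+}$ from a mere projector to the identity, which is exactly what lets the $\Sigma$ and $\Sigma^{-1}$ factors cancel and the whole computation telescope. Without full row rank the claimed closed form fails, so the one point I would flag explicitly in the write-up is each place where $HH^{+}=I$ (equivalently $(H^{+})^{\trans}H^{\trans}=I$) is invoked. A fully equivalent alternative would be to substitute the thin singular value decomposition $H=U\Sigma_H V^{\trans}$ with $U$ square orthogonal (by full row rank) and read off both $\tilde{H}^{+}$ and $X$ in closed form, but the four-condition verification above is shorter and avoids introducing extra notation.
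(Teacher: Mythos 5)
Your verification is correct: every identity you use ($HH^{+}=I$ for full row rank $H$, $H^{+}HH^{\trans}=H^{\trans}$, $H^{+}HH^{+}=H^{+}$, symmetry of $H^{+}H$) holds, the telescoping computations $\tilde{H}X=X\tilde{H}=H^{+}H$ check out, and the four Moore--Penrose conditions together with uniqueness deliver the claim. The route, however, is genuinely different from the paper's. The paper writes the Cholesky factorization $\Sigma=RR^{\trans}$, observes $\tilde{H}=(R^{-1}H)^{\trans}(R^{-1}H)$, and then applies two standard pseudoinverse identities in succession: $(A^{\trans}A)^{+}=A^{+}(A^{+})^{\trans}$, followed by $(R^{-1}H)^{+}=H^{+}R$ (which is where full row rank enters for them). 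That argument is two lines long but leans on those two nontrivial facts as black boxes; your direct verification of the defining axioms is longer but entirely self-contained, requiring nothing beyond the explicit formula $H^{+}=H^{\trans}(HH^{\trans})^{-1}$ and uniqueness of the pseudoinverse. You also correctly isolate the role of the full-row-rank hypothesis --- it is exactly what turns $HH^{+}$ into the identity so that $\Sigma^{-1}$ and $\Sigma$ cancel --- which is the same place the hypothesis does the work in the paper's step $(a)$, just made more transparent. Either proof is acceptable; yours trades brevity for elementarity.
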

\begin{proof}
Let $\Sigma = RR^{\trans}$ be the Cholesky decomposition.
\begin{align*}
\tilde{H}^{+}  &= (   (R^{-1}H)^{\trans} R^{-1}H  )^{+} = (R^{-1}H)^{+} ((R^{-1}H)^{+})^{\trans} \\
& \overset{(a)}{=} H^{+} R R^{\trans}(H^{+})^{\trans} = H^{+}\Sigma (H^{+})^{\trans},
\end{align*}
where $(a)$ follows since $H$ is full row rank. 
\end{proof}
}


\end{document}